\newtheorem{theorem}{Theorem}[section]
\newtheorem{lemma}[theorem]{Lemma}
\newtheorem{Definition}[theorem]{Definition}
\newtheorem{claim}[theorem]{Claim}
\newcommand{\set}[1]{\left\{#1\right\}}
\newcommand{\cardinal}[1]{\left|#1\right|}
\newcommand{\floor}[1]{\left\lfloor#1\right\rfloor}
\newcommand{\ceil}[1]{\left\lceil#1\right\rceil}
\newcommand{\lfrac}[1]{\left\lfloor#1\right\rceil}
\newcommand{\rfrac}[1]{\left\lceil#1\right\rfloor}
\DeclareMathOperator{\union}{\bigcup}
\newcommand{\R}{\mathbb{R}}
\newcommand{\Z}{\mathbb{Z}}
\newcommand{\calA}{{\mathcal A}}
\newcommand{\calB}{{\mathcal B}}
\newcommand{\calC}{{\mathcal C}}
\newcommand{\calF}{{\mathcal F}}
\newcommand{\calJ}{{\mathcal J}}
\newcommand{\calS}{{\mathcal S}}
\newcommand{\calU}{{\mathcal U}}
\newcommand{\calV}{{\mathcal V}}
\newcommand{\bbT}{{\mathbb T}}
\newcommand{\eps}{\epsilon}
\newcommand{\dav}{d_{\mathsf {av}}}
\newcommand{\LP}{\mathsf{LP}}
\newcommand{\KM}{{\sf KM}\xspace}
\newcommand{\CKM}{{\sf CKM}\xspace}
\newcommand{\UFL}{{\sf UFL}\xspace}
\newcommand{\CFL}{{\sf CFL}\xspace}
\title{On Uniform Capacitated $k$-Median Beyond the Natural LP Relaxation}
\author{Shi Li  \\Toyota Technological Institute at Chicago \\ shili@ttic.edu}
\date{}
\begin{document}
\maketitle

\begin{abstract}
In this paper, we study the uniform capacitated $k$-median problem. In the problem, we are given a set $\calF$ of potential facility locations, a set $\calC$ of clients, a metric $d$ over $\calF \cup \calC$, an upper bound $k$ on the number of facilities we can open and an upper bound $u$ on the number of clients each facility can serve.  We need to open a subset $\calS \subseteq \calF$ of $k$ facilities and connect clients in $\calC$ to facilities in $\calS$ so that each facility is connected by at most $u$ clients. The goal is to minimize the total connection cost over all clients. Obtaining a constant approximation algorithm for this problem is a notorious open problem; most previous works gave constant approximations by either violating the capacity constraints or the cardinality constraint.  Notably, all these algorithms are based on the natural LP-relaxation for the problem. The LP-relaxation has unbounded integrality gap, even when we are allowed to violate the capacity constraints or the cardinality constraint by a factor of $2-\eps$. 

Our result is an $\exp(O(1/\eps^2))$-approximation algorithm for the problem that violates the cardinality constraint by a factor of $1+\eps$. That is, we find a solution that opens at most $(1+\eps)k$ facilities whose cost is at most $\exp(O(1/\eps^2))$ times the optimum solution when at most $k$ facilities can be open.  This is already beyond the capability of the natural LP relaxation, as it has unbounded integrality gap even if we are allowed to open $(2-\eps)k$ facilities.  Indeed, our result is based on a novel LP for this problem.  We hope that this LP is the first step towards a constant approximation for capacitated $k$-median.

The version as we described is the hard-capacitated version of the problem, as we can only open one facility at each location. This is as opposed to the soft-capacitated version, in which we are allowed to open more than one facilities at each location.  The hard-capacitated version is more general, since one can convert a soft-capacitated instance to a hard-capacitated instance by making enough copies of each facility location. We give a simple proof that in the uniform capacitated case, the soft-capacitated version and the hard-capacitated version are actually equivalent, up to a small constant loss in the approximation ratio. Moreover, we show that the given potential facility locations do not matter: we can assume $\calF = \calC$. 
\end{abstract}

\section{Introduction}
\label{section:introduction}

In the uniform capacitated $k$-median (\CKM) problem, we are given a set $\calF$ of potential facility locations, a set $\calC$ of clients, a  metric $d$ over $\calF \cup \calC$, an upper bound $k$ on the number of facilities we can open and an upper bound $u$ on the number of clients each facility can serve.  The goal is to find a set $\calS \subseteq \calF$ of at most $k$ open facilities and a connection assignment $\sigma : \calC \to \calS$ of clients to open facilities such that $\cardinal{\sigma^{-1}(i)} \leq u$ for every facility $i \in \calS$, so as to minimize the connection cost $\sum_{j \in \calC}d(j, \sigma(j))$.  

When $u = \infty$, the problem becomes the classical NP-hard $k$-median (\KM) problem. There has been extensive work on approximation algorithms for $k$-median.  The first constant approximation, due to Charikar et al.\ \cite{CGT99}, is an LP-based $6\frac23$-approximation. This factor was improved by a sequence of papers \cite{JV01, CG99, JMS02, AGK01, LS13}. %The current best approximation ratio for $k$-median is $1+\sqrt{3}+\eps$, due to Li and Svensson \cite{LS13}.  
In particular, Li and Svensson \cite{LS13} gave a $1+\sqrt{3} + \eps\approx 2.732 + \eps$-approximation for $k$-median, improving the previous decade-old ratio of $3+\eps$ due to \cite{AGK01}. Their algorithm is  based on a psudo-approximation algorithm that opens $k+O(1)$ facilities, and a process that turns a pseudo-approximation  into a true approximation.  Based on this framework, Byrka et al.\ \cite{BPR15} improved the approximation ratio from $2.732 + \eps$ to the current best $2.611+\eps$ very recently.  On the negative side, it is NP-hard to approximate the problem within a factor of $1+2/e-\eps\approx 1.736$ \cite{JMS02}. 

Little is known about the uniform \CKM problem; all constant approximation algorithms are pseudo-approximation algorithms, which produce solutions that violate either the capacity constraints or the cardinality constraint (the constraint that at most $k$ facilities are open). Charikar et al.\ \cite{CGT99} obtained a $16$-approximation for the problem, by violating the capacity constraint by a factor of $3$.  Later, Chuzhoy and Rabani \cite{CR05} gave a 40-approximation with capacity violation 50, for the more general \emph{non-uniform} capacitated $k$-median, where different facilities can have different capacities.  Recently, Byrka et al.\ \cite{BFR13} improved the capacity violation constant $3$ of \cite{CGT99} for uniform \CKM to $2+\eps$ and achieved approximation ratio of $O(1/\eps^2)$. This factor was improved to $O(1/\eps)$ by Li \cite{Li14}. Constant approximations for \CKM can also be achieved by violating the cardinality constraint. Gijswijt and Li \cite{GL13} designed a $(7+\eps)$-approximation algorithm for a more general version of \CKM that opens $2k+1$ facilities. 

There are two slightly different versions of the (uniform or non-uniform) \CKM problem. In the version as we described, we can open at most one facility at each location. This is sometimes called hard \CKM.  This is as opposed to soft \CKM, where we can open more than one facilities at each location.   Notice that hard \CKM is more general as one can convert a soft \CKM instance to a hard \CKM instance by making enough copies of each location. The result of Chuzhoy and Rabani \cite{CR05} is for soft \CKM while the other mentioned results are for (uniform or non-uniform) hard \CKM.

Most previous approximation algorithms on \CKM are based on the basic LP relaxation. A simple example shows that the LP has unbounded gap. This is the main barrier to a constant approximation for \CKM.  Moreover, the integrality gap is unbounded even if we are allowed to violate the cardinality constraint or the capacity constraint by a factor of $2-\eps$.  Thus, for algorithms based on the basic LP relaxation, \cite{Li14} and \cite{GL13} almost gave the smallest capacity violation factor and cardinality violation factor, respectively.

Closely related to \KM and \CKM are the uncapacitated facility location (\UFL) and capacitated facility location (\CFL) problems. \UFL has similar inputs as \KM but instead of giving an upper bound $k$ on the number of facilities we can open, it specifies an opening cost $f_i$ for each facility $i \in \calF$. The objective is the sum of the cost for opening facilities and the total connection cost. In \CFL, every facility $i \in \calF$ has a capacity $u_i$ on the maximum number of clients it can serve. There has been a steady stream of papers giving constant approximations for \UFL~\cite{LV92B,STA97,JV01,CS04,KPR98,CG99,JMM03,JMS02,MYZ06,Byr07}. The current best approximation ratio for \UFL is $1.488$ due to Li~\cite{Li11}, while the hardness of approximation is $1.463$~\cite{GK98}.

In contrast to \CKM, constant approximations are known for \CFL. Mahdian et al.\ \cite{MYZ06} gave a $2$-approximation for soft \CFL. For uniform hard \CFL, Korupolu et al.\ \cite{KPR98} gave an $(8+\eps)$-approximation, which was improved to $6+\eps$ by Chudak and Williamson \cite{CW05} and to $3$ by Aggarwal et al.\ \cite{AAB10}. For (non-uniform) hard \CFL, the best approximation ratio is $5$ due to Bansal et al.\ \cite{BGG12}, which improves the ratio of $3+2\sqrt{2}$ by Zhang et al.\ \cite{ZCY05}. All these algorithms for hard \CFL are based on local search. Recently, An et al.\ gave an LP-based constant approximation algorithm for hard \CFL \cite{ASS14}, solving a long-standing open problem \cite{WS11}.

\paragraph{Our contributions} In this paper, we introduce a novel LP for uniform \CKM, that we call the \emph{rectangle LP}. We give a rounding algorithm that achieves constant approximation for the problem, by only violating the cardinality constraint by a factor of $1+\eps$, for any constant $\eps > 0$.  This is already beyond the approximability of the basic LP relaxation, as it has unbounded integrality gap even if we are allowed to violate the cardinality constraint by $2-\eps$. To be more specific, we prove

\begin{theorem}\label{theorem:main}
Given a uniform capacitated $k$-median instance and a constant $\eps > 0$, we can find in polynomial time a solution with at most $\ceil{(1+\eps)k}$ open facilities and total connection cost at most $\exp(O(1/\eps^2))$ times the cost of the optimum solution with $k$ open facilities. 
\end{theorem}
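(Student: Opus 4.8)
\emph{Preprocessing.} I would first invoke the two reductions promised in the abstract: for uniform capacities the soft- and hard-capacitated versions are equivalent up to a constant loss, and one may assume $\calF=\calC$. It is also convenient, at the cost of a $(1+O(\eps))$ factor, to discretize so that all nonzero distances are integer powers of $1+\eps$ lying in a bounded range, leaving only a controlled number of distance scales to reason about.

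\emph{The rectangle LP.} The natural relaxation uses $y_i\in[0,1]$ and $x_{ij}\ge 0$ with $\sum_i x_{ij}=1$, $x_{ij}\le y_i$, $\sum_i y_i\le k$ and $\sum_j x_{ij}\le u y_i$; as noted it has unbounded gap even against $(2-\eps)k$ facilities, essentially because it never certifies that supply and demand are balanced \emph{locally, at every scale}. I would strengthen it by adjoining a family of valid inequalities indexed by ``rectangles'' — roughly, products of two metric balls $\bigl(\Ball(v,r),\Ball(v',r')\bigr)$ — that in any integral solution bound the number of clients of one ball served from the other ball in terms of $u$ times the number of open facilities of the other ball, together with the analogous linkage between the facility mass a client sees at consecutive radii. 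Every such inequality holds for any integral solution, so $\opt_{\LP}\le\opt$; and since only polynomially many rectangles are relevant (or one designs a separation oracle), the LP is solvable in polynomial time.

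\emph{Rounding.} Given an optimal fractional $(x,y)$, I would filter/ball-carve on the connection radii $\bar d_j=\sum_i x_{ij}d(i,j)$, exactly as in the classical $k$-median algorithms, to build a hierarchical decomposition of $\calC$ into well-separated clusters at each scale; the rectangle constraints are what guarantee that the fractional facility mass ``attached'' to a cluster is commensurate with the number of clients it must absorb. Within a cluster I would then (i) consolidate the fractional facility mass onto a few representative locations and open $\ceil{\,\text{mass}\,}$ facilities there, and (ii) solve a transportation problem from these opened facilities (each of supply $u$) to the clients (each of demand $1$): the rectangle inequalities make this transportation instance feasible, and integrality of the transportation polytope yields an integral capacitated assignment. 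Rerouting costs an $\exp(O(1/\eps))$ factor per scale, and the ``$+1$'' incurred per cluster is charged either to the cluster having large fractional mass or to there being few clusters, so that summed over all clusters at most $(1+\eps)k$ facilities are opened — this is precisely where being allowed to violate the cardinality by $1+\eps$ (rather than hitting $k$ exactly) is used.

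\emph{Iterating, and the hard part.} A single pass does not produce a globally integral solution, since rounding one scale alters the residual instance seen at the next; I would therefore iterate $O(1/\eps)$ times (equivalently, recurse to depth $O(1/\eps)$), each pass multiplying the connection cost by $\exp(O(1/\eps))$ while contributing only an $O(\eps^2)$-fraction of extra facilities, so that composing the passes gives ratio $\exp(O(1/\eps^2))$ and cardinality $\ceil{(1+\eps)k}$. The main obstacle — and the reason the natural LP cannot work — is steps (i)--(ii): one must \emph{simultaneously} keep the opened facilities within a $(1+O(\eps))$ factor of the fractional mass, preserve feasibility of the capacitated assignment (which needs supply/demand balance at \emph{all} scales, exactly what the rectangle inequalities supply and the natural LP lacks), and control how these errors compound across scales so that the final connection cost stays bounded.
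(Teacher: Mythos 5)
Your proposal captures the broad architecture — strengthen the natural LP with rectangle-indexed valid inequalities, filter to well-separated representatives, consolidate fractional facility mass and open $\ceil{\cdot}$ facilities, then use matching/transportation integrality with $\calF = \calC$ — but you are missing the two ideas that actually make the argument go through, and one of them is fatal as stated.

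The critical gap is the exact form of the rectangle constraint. You describe the inequality as bounding the clients of one set served from the other ``in terms of $u$ times the number of open facilities,'' i.e. something like $x_{\calB,\calJ}\le\min\{|\calJ|,\,u\,y_\calB\}$. The paper points out explicitly that \emph{this} version of the rectangle LP is equivalent to the basic LP and therefore still has unbounded gap: on the gap instance (groups of $u+1$ co-located clients, $y$-mass $1+1/u$ per group), taking $\calJ$ a whole group and $\calB$ its colocated facilities gives $\min\{u+1,\ u(1+1/u)\}=u+1$, a vacuous bound. The actual constraint is $x_{\calB,\calJ}\le f(|\calJ|,y_\calB)$ where $f(p,q)$ agrees with $\min\{p,qu\}$ at integer $q$ but is obtained by \emph{linear interpolation in $q$}; the interpolated middle segment on $\floor{p/u}<q<\ceil{p/u}$ forces $x_{\calB,\calJ}\le u+1/u < u+1$ on that same rectangle, which is what kills the gap instance. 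Without writing down this interpolation your ``rectangle inequalities'' give you nothing beyond the basic LP, and steps (i)--(ii) of your rounding cannot be made to work.

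A second, technical gap: you assert the LP is ``solvable in polynomial time'' by restricting to polynomially many rectangles (products of balls) or designing a separation oracle. The paper's constraints range over \emph{all} pairs $(\calB\subseteq\calF,\calJ\subseteq\calC)$ and no polynomial separation oracle is known; the paper instead runs a round-or-separate ellipsoid loop, where the rounding algorithm itself either produces the desired integral solution or exhibits a violated rectangle $(\calB,\calJ)$ (the $\calB$'s that need checking are exactly the $\calU_\calA$ for the $O(n)$ sets $\calA$ that arise in the decomposition). You would need to verify that restricting to ball-indexed rectangles still suffices, which is not obvious and is not what the paper does.

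Finally, your accounting of the $\exp(O(1/\eps^2))$ factor via $O(1/\eps)$ distance scales each costing $\exp(O(1/\eps))$ is a plausible-sounding story but is not the paper's mechanism and you give no argument for why the errors compose that way. The paper never discretizes distances; it builds \emph{neighborhood trees} of size between $\ell$ and $\ell^2$ (with $\ell=\Theta(1/\eps)$) over the representatives and moves demand up a laminar family of components defined by edge-ranks. The $\exp(O(\ell^2))$ comes from bounding the moving distance within a single tree by $|\calV|\cdot 3^{|\calV|}$ times the relevant separation, with $|\calV|\le\ell^2$, combined with the key inequality $\sum_{j}x_{\calB,j}(1-x_{\calB,j})\ge u\lfrac{y'_\calB}\rfrac{y_\calB}$ (which is where the interpolated $f$ is actually used). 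That lower bound on the ``fractional spread'' is the lemma that converts the rectangle constraints into a bound on $d(\calA,\calC^*\setminus\calA)$, and it has no analogue in your sketch. Without it, there is no way to bound the distance you pay when you are forced to push $\lfrac{y'}$ units of demand out of an isolated cluster.
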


The running time of our algorithm is $n^{O(1)}$, where the constant in the exponent does not depend on $\eps$.   If we allow the running time to be $n^{O(1/\eps)}$, we can remove the ceiling in the number of open facilities: we can handle the case when $k \leq O(1/\eps)$ by enumerating the $k$ open facilities.    As our LP overcomes the gap instance for the basic LP relaxation, we hope it is the first step towards a constant approximation for capacitated $k$-median. 

Our algorithm is for the hard capacitated version of the problem; namely, we open at most one facility at each location.  Indeed, we give a simple proof that, up to a constant loss in the approximation ratio, we can assume the instance is soft-capacitated and $\calF = \calC$.

\begin{theorem}
\label{theorem:soft-hard-same}
Let $(k, u, \calF, \calC, d)$ be a hard uniform \CKM instance,  and $C$ be the minimum connection cost of the instance when all facilities in $\calF$ are open.\footnote{Given the set of open facilities, finding the best connection assignment is a minimum cost bipartite matching problem.}  Then, given any solution of cost $C'$ to the soft uniform \CKM instance $(k, u, \calC, \calC, d)$, we can find a solution of cost at most $C + 2C'$ to the hard uniform \CKM instance $(k, u, \calF, \calC, d)$.
\end{theorem}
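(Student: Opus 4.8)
\noindent\textit{Proof proposal.}

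The plan is to build the hard solution by routing each client \emph{first to the soft-center it was assigned to, and then onward to a facility of $\calF$}, where the second leg is a minimum-cost transportation guided by the optimal all-open assignment. Fix an assignment $\tau:\calC\to\calF$ of minimum cost among those with all of $\calF$ open, so $\sum_{j} d(j,\tau(j))=C$ and $|\tau^{-1}(i)|\le u$ for all $i$; and write the given soft solution as copies $p_1,\dots,p_m$ ($m\le k$, each $p_t\in\calC$) with disjoint client groups $A_1,\dots,A_m$ partitioning $\calC$, $|A_t|\le u$, and $\sum_t\sum_{j\in A_t} d(j,p_t)=C'$; for a client $j$ let $t(j)$ be the index with $j\in A_{t(j)}$. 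Observe that $\tau$ itself realizes a cheap routing from the soft-centers to $\calF$: sending $j$ from $p_{t(j)}$ to $\tau(j)$ has total cost $\sum_j d(p_{t(j)},\tau(j))\le\sum_j\big(d(p_{t(j)},j)+d(j,\tau(j))\big)=C'+C$ and respects the capacities of $\calF$. If in addition this routing used at most $k$ facilities we would be done, since composing it with the first leg and using the triangle inequality $d(j,\sigma'(j))\le d(j,p_{t(j)})+d(p_{t(j)},\sigma'(j))$ gives a hard solution of cost at most $C'+(C+C')=C+2C'$. The whole difficulty is that $\tau$ may open far more than $k$ facilities, so we must trade $\tau$'s routing for one almost as cheap but spread over at most $k$ facilities.

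The second step decouples the problem with a bipartite graph. Let $\Phi$ have vertex classes $\{1,\dots,m\}$ (the copies) and $\calF$, with an edge $(t,i)$ whenever some client of $A_t$ is assigned by $\tau$ to $i$. For a connected component of $\Phi$ with copy-set $\Gamma$ and facility-set $\Psi$, chasing adjacencies in both directions shows $\tau^{-1}(\Psi)=\bigsqcup_{t\in\Gamma}A_t$; hence the components partition the clients, the quantities $C$ and $C'$ split additively over components, and—since $\sum_{\text{components}}|\Gamma|=m\le k$—it suffices to handle each component separately while spending at most $|\Gamma|$ facilities on it.

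The third step treats one component $X$. On $\Gamma$ (sources, supply $|A_t|$) and $\Psi$ (sinks, capacity $u$), with an arc $(t,i)$ for each edge of the component and cost $d(p_t,i)$, set up the transportation problem of shipping all $\sum_{t\in\Gamma}|A_t|$ units. It is feasible—routing $|\tau^{-1}(i)\cap A_t|$ units along $(t,i)$ saturates all supplies and respects the caps—and that witness has cost $\le C_X+C'_X$, where $C_X,C'_X$ denote the component's shares of $C,C'$. Here I would invoke the key fact: a minimum-cost transportation instance with a common sink capacity $u$ and all supplies at most $u$ has an optimal solution using at most as many sinks as there are sources. Granting this, the component admits an integral routing of cost $\le C_X+C'_X$ supported on at most $|\Gamma|$ facilities; turning it back into a client assignment (arbitrarily splitting each $A_t$ according to the routing) and applying the same triangle inequality bounds $X$'s connection cost by $C'_X+(C_X+C'_X)=C_X+2C'_X$. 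Summing over components yields a hard solution of cost at most $C+2C'$ opening at most $m\le k$ facilities, with no capacity violated because distinct components use disjoint facilities.

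The main obstacle is the key fact about transportation problems. I would prove it by taking an optimal vertex solution, whose support is a forest, each tree having some $s$ sources and $d$ sinks and $s+d-1$ edges; if some tree had $d>s$, averaging over sink degrees (average $(s+d-1)/d<2$) forces a leaf sink $b$ joined to a single source $a$, and since $d>s$ that source has another support-neighbor $b'$. Optimality forbids a strictly cost-decreasing exchange in either direction, so the exchange that shifts flow from $b$ to $b'$ through $a$ is cost-neutral; pushing it until $b$ empties shrinks the support and contradicts having picked a vertex solution with the fewest sinks—here one uses that $a$'s supply is at most $u$, so $a$ cannot be forced to saturate $b$ by itself, which is the only obstruction to emptying $b$. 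The remaining ingredients—the identity $\tau^{-1}(\Psi)=\bigsqcup_{t\in\Gamma}A_t$, the feasibility witness for the transportation, the integrality of the transportation polytope, and the triangle-inequality bookkeeping—are routine.
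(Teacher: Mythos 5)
Your overall strategy is sound and is essentially the paper's argument in different clothes: both compose the given soft assignment (cost $C'$) with a routing from soft-centers to $\calF$-facilities of cost at most $C+C'$, then reduce the number of $\calF$-facilities used to at most $k$ without increasing that routing's cost, and close with one more triangle inequality. The paper does the reduction by hand on the $\calF$--$\calS$ bipartite multigraph, first killing cycles and then alternating along paths between two incomplete $\calF$-facilities until each tree of the support forest has at most one facility matched fewer than $u$ times; the count $\#\calF \le \#\calS$ per tree then follows. You instead package this as a ``key fact'' about transportation polytopes with supplies $\le u$ and common sink capacity $u$. That fact is true and does the job; your component decomposition and the restriction of arcs to $\Phi$'s edges are harmless but unnecessary, since the same vertex argument works globally on the complete bipartite graph.

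The gap is in your proof sketch of the key fact. You find a leaf sink $b$ adjacent to source $a$ with another support-neighbor $b'$, and claim the only obstruction to pushing flow $b\to a\to b'$ until $b$ empties is ``$a$ saturating $b$.'' That is backwards: since $a$ has a second support edge, $\text{flow}(a,b) < s_a \le u$, so $b$ is \emph{not} at capacity -- that constraint is never the obstruction. The real obstruction is the capacity of $b'$, and at a vertex it \emph{always} bites: counting tight constraints on the tree's support edges ($s$ supply rows plus $d_0$ tight-capacity rows must have rank at least $e = s+d-1$) forces $d_0 \ge d-1$, so every sink in the tree except possibly $b$ is already at capacity $u$, and your single-hop exchange is infeasible from the start. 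Either fix it with the alternating-path version (keep extending $b-a-b'-a'-b''-\cdots$ until you reach a sink you can shrink, which is exactly the paper's cycle/path surgery), or better, drop the exchange entirely: the same counting $d_0 \ge d-1$, together with $\sum_{\text{tree sinks}} (\text{inflow}) = \sum_{\text{tree sources}} s_t \le su$ and the non-tight sink having inflow strictly in $(0,u)$, gives $(d-1)u < su$, hence $d \le s$, which is the key fact with no optimality argument at all -- it holds at every vertex, not just optimal ones.
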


$C$ is a trivial lower bound on the cost of the hard uniform \CKM instance $(k, u, \calF, \calC, d)$. Moreover, the optimum cost of the soft uniform \CKM instance $(k, u, \calC, \calC, d)$ is at most twice the optimum cost of the hard uniform \CKM instance $(k, u, \calF, \calC, d)$.  Thus, any $\alpha$-approximation for the soft instance $(k, u, \calC, \calC, d)$ implies a $1 + 2(2\alpha) = (1+4\alpha)$-approximation for the hard instance $(k, u, \calF, \calC, d)$.  The reduction works even if we are considering pseudo-approximation algorithms by allowing violating the cardinality constraint by $\beta \geq 1$ and the capacity constraint by $\gamma \geq 1$; we can simply apply the above theorem to the instance $(\floor{\beta k}, \floor{\gamma u}, \calF, \calC, d)$.    Thus, we only focus on soft uniform \CKM instances with $\calF = \calC$ in the paper.  

Though we have $\calF = \calC$, we keep both notions to indicate whether a set of facility locations or a set of clients is being considered. Most part of our algorithm works without assuming $\calF = \calC$; only a single step uses this assumption.

The remaining part of the paper is organized as follows.  In Section~\ref{section:prelim}, we  introduce some useful notations, the basic LP relaxation for uniform \CKM, the gap instance and the proof of Theorem~\ref{theorem:soft-hard-same}. In Section~\ref{section:config-LP}, we describe our rectangle LP. Then in Section~\ref{section:rounding}, we show how to round a fractional solution obtained from the rectangle LP. 
\ifdefined\CR
\else
\fi
We leave some open questions in Section~\ref{section:discussion}.

\section{Preliminaries}
\label{section:prelim}
Let $\Z_+, \Z_*, \R_+$ and $\R_*$ denote the set of positive integers, non-negative integers, positive real numbers and non-negative real numbers respectively.  For any $x\in \R_*$, let $\floor{x}$ and $\ceil{x}$ denote the floor and ceiling of $x$ respectively. Let $\lfrac{x} = x - \floor{x}$ and $\rfrac{x} = \ceil{x} - x$.

Given two sets $\calC', \calC'' \subseteq \calC$ of points, define $d(\calC', \calC'') = \min_{j \in \calC', j' \in \calC''}d(j, j')$  be the minimum distance from points in $\calC'$ to points in $\calC''$. We simply use $d(j, \calC'')$ for $d(\set{j}, \calC'')$. 

Following is the basic LP for the uniform \CKM problem:
\ifdefined\CR
\vspace*{-0.5\abovedisplayskip}
\begin{equation}
\textstyle \min \qquad \sum_{i \in \calF, j\in\calC}d(i,j)x_{i,j} \qquad \text{s.t.} \tag{Basic LP}
\end{equation}
\vspace*{-20pt}
\begin{alignat}{2}\setlength{\abovedisplayskip}{0pt}\setlength{\belowdisplayskip}{0pt}
\textstyle \sum_{i \in \calF} y_i &\leq k,  \label{LPC:k-facilities} \\
\textstyle  \sum_{i \in \calF}x_{i,j} &=1, &\qquad &\forall j \in \calC, \label{LPC:client-must-connect} \\
\textstyle  x_{i,j} &\leq y_i, &\qquad &\forall i \in \calF, j \in \calC, \label{LPC:connect-to-open}\\
\textstyle  \sum_{j \in \calC}x_{i,j} &\leq uy_i, &\qquad &\forall i \in \calF, \label{LPC:capacity} \\
\textstyle  x_{i,j}, y_i &\geq 0, &\qquad &\forall i \in \calF, j \in \calC. \label{LPC:xy-non-neg} 
\end{alignat}
\else
\begin{equation}
\textstyle \min \qquad \sum_{i \in \calF, j\in\calC}d(i,j)x_{i,j} \qquad \text{s.t.} \tag{Basic LP}
\end{equation}
\vspace*{-30pt}

\begin{minipage}{0.4\textwidth}
\begin{alignat}{2}\setlength{\abovedisplayskip}{0pt}\setlength{\belowdisplayskip}{0pt}
\textstyle \sum_{i \in \calF} y_i &\leq k,  \label{LPC:k-facilities} \\
\textstyle  \sum_{i \in \calF}x_{i,j} &=1, &\qquad &\forall j \in \calC, \label{LPC:client-must-connect} \\
\textstyle  x_{i,j} &\leq y_i, &\qquad &\forall i \in \calF, j \in \calC, \label{LPC:connect-to-open}
\end{alignat}
\end{minipage}
\begin{minipage}{0.55\textwidth}
\begin{alignat}{2}\setlength{\abovedisplayskip}{0pt}\setlength{\belowdisplayskip}{0pt}
\textstyle  \sum_{j \in \calC}x_{i,j} &\leq uy_i, &\qquad &\forall i \in \calF, \label{LPC:capacity} \\
\textstyle  x_{i,j}, y_i &\geq 0, &\qquad &\forall i \in \calF, j \in \calC. \label{LPC:xy-non-neg} \\[8pt]
 \nonumber
\end{alignat}
\end{minipage}
\vspace*{1pt}
\fi

In the above LP, $y_i$ is the number of open facilities at location $i$, and $x_{i, j}$ indicates whether a client $j$ is connected to a facility at $i$.  Constraint~(\ref{LPC:k-facilities}) says that we can open at most $k$ facilities, Constraint~(\ref{LPC:client-must-connect}) says that every client must be connected to a facility, Constraint~(\ref{LPC:connect-to-open}) says that a client can only be connected to an open facility and Constraint~(\ref{LPC:capacity}) is the capacity constraint. In the integer programming capturing the problem, we require $y_i \in \Z_*$ and $x_{i, j} \in \set{0, 1}$ for every $i \in \calF, j \in \calC$. In the LP relaxation, we relax the constraint to $x_{i, j} \geq 0, y_i \geq 0$.

The basic LP has unbounded integrality gap, even if we are allowed to open $(2-\epsilon)k$ facilities.  The gap instance is the following.  $k = u+1$ and $|\calF| = |\calC| = n = u(u+1)$. The $n$ points are partitioned into $u$ groups, each containing $u+1$ points.  Two points in the same group have distance 0 and two points in different groups have distance 1. The following LP solution has cost 0: $y_i = 1/u$ for every $i \in \calF$ and $x_{i,j} $ is $1/(u+1)$ if $i$ is co-located with $j$ and $0$ otherwise.  The optimum solution is non-zero even if we are allowed to open $2u-1 = 2k-3$ facilities: there must be a group in which we open at most 1 facility and some client in the group must connect to a facility outside the group. \footnote{Note that this gap instance is not bad when we are allowed to violate the capacity constraints by $1+\eps$. However, if we are only allowed to violate the capacity constraints, there is a different bad instance: each group has $2u-1$ clients and $k = 2u-1$. Fractionally, we open $2-1/u$ facilities in each group and the cost is $0$. But if we want to open $2u-1$ facilities integrally, some group contains at most 1 facility and thus the capacity violation factor has to be $2-1/u$.}

\subsection{Reduction to Soft Capacitated Case: Proof of Theorem~\ref{theorem:soft-hard-same}} 
\label{subsec:soft-hard-same}
\begin{proof}
Consider the solution for the soft uniform capacitated \CKM instance $(k, u, \calC, \calC, d)$. We construct a set $\calS$ of size at most $k$ as follows. Suppose we opened $s$ facilities at some location $j \in \calC$, we add $s$ facility locations collocated with $j$ to $\calS$. By the assumption, we can find a matching of cost $C$ between $\calF$ and $\calC$ (the cost of matching $i\in \calF$ to $j \in \calC$ is $d(i, j)$), where each facility in $\calF$ is matched at most $u$ times and each client in $\calC$ is matched exactly once.   We are also given a matching of cost $C'$ between $\calC$ and $\calS$, where each client in $\calC$ is matched exactly once and each facility $i \in \calS$ is matched $t_i \leq u$ times.  By concatenating the two matchings and by triangle inequalities, we obtain a matching between $\calF$ and $\calS$ of cost at most $C+C'$, such that every facility in $\calF$ is matched at most $u$ times and every facility in $i \in \calS$ is matched $t_i$ times.  We then modify the matching between $\calF$ and $\calS$ in iterations, so that finally at most $\cardinal{\calS} \leq k$ facilities in $\calF$ are matched. Moreover, the modifications do not increase the cost of the matching. 

Focus on the bipartite multi-graph between $\calF$ and $\calS$ defined by the matching. Then we can assume the graph is a forest, when ignoring multiplicities.  If there is an even cycle, we can color the edges in the cycle alternatively in black and white. Assume the total length of black edges is at most that of white edges.  Then, we can increase the multiplicities of black edges by one and decrease the multiplicities of white edges by one.  This does not increase the cost of the matching. We can apply this operation until the cycle breaks.

We can further assume that in any tree of the forest, at most one facility in $\calF$ is matched less than $u$ times.  If there are two, we then take the path in the tree connecting the two facilities (path has even length), color the edges in the path alternatively in black and white. Assume the total length of black edges is at most that of white edges. Again we can increase the multiplicities of black edges and decrease the multiplicities of white edges.  We can apply this operation until either some edge disappears from the tree, or one of the two facilities is matched exactly $u$ times.

Now we claim that at most $k$ facilities in $\calF$ are matched.  To see this, focus on each tree in the forest containing at least one edge.  If facilities in $\calS$ in the tree are matched $t$ times in total,  so are the facilities in $\calF$ in the tree. Thus, there are exactly $\ceil{t/u}$ facilities in $\calF$ in this tree, since at most one facility in $\calF$ in the tree is matched less than $u$ times. The number of facilities in $\calS$ in this tree is at least $\ceil{t/u}$ since each facility in $\calS$ is matched $t_i \leq u$ times.  This proves the claim. 

Let $\calF' \subseteq \calF$ be the set of facilities that are matched.  Then, $\cardinal{\calF'} \leq \cardinal{\calS} \leq k$, and we have a matching between $\calF'$ and $\calS$ of cost at most $C+C'$, where each facility in $\calF'$ is matched at most $u$ times and each facility in $\calS$ is matched $t_i$ times. By concatenating this matching with the matching between $\calS$ and $\calC$ of cost $C'$, we obtain a solution of cost $C+2C'$ with open facilities $\calF'$ to the uniform hard \CKM instance $(k, u, \calF, \calC, d)$. This finishes the proof. 
\end{proof}
\section{Rectangle LP}
\label{section:config-LP}

Our rectangle LP is motivated by the gap instance described in Section~\ref{section:prelim}.  Focus on a group of $u+1$ clients in the gap instance.  The fractional solution opens $1+1/u$ facilities for this group and use them to serve the $u(1+1/u) = u+1$ clients in the group.   We interpret this fractional event as a convex combination of integral events: with probability $1-1/u$ we open 1 facility for the group and serve $u$ clients; with probability $1/u$ we open 2 facilities and serve $2u$ clients.  However, there are only $u+1$ clients in this group; even if $2$ facilities are open, we can only serve $u+1$ clients.  Thus, we can only serve $(1-1/u)u + (1/u) (u+1) = u + 1/u < u+1$ clients using $1+1/u$ open facilities. 

This motivates the following definition of  $f(p, q)$ for any $p \in \Z_*, q \in \R_*$.  When $q \in \Z_*$,  let $f(p, q) = \min \set{qu, p}$ be the upper bound on the number of clients in a set of cardinality $p$ that can be connected to a set of $q$ facilities. We then extend the range of $q$ from $\Z_*$ to $\R_*$ using linear interpolation(see Figure~\ref{figure:f}).  Then the exact definition of $f(p, q)$  is the following:
\begin{align*}
f(p, q) = \begin{cases}
qu & q \leq \floor{ \frac pu}\\
u\floor{\frac pu} + u\lfrac{\frac pu}\left(q - \floor{\frac pu}\right) & \floor{\frac pu} < q < \ceil{\frac pu}\\
p & q \geq \ceil{\frac pu}
\end{cases}.
\end{align*}

\begin{figure}
\centering
\includegraphics[width=0.45\textwidth]{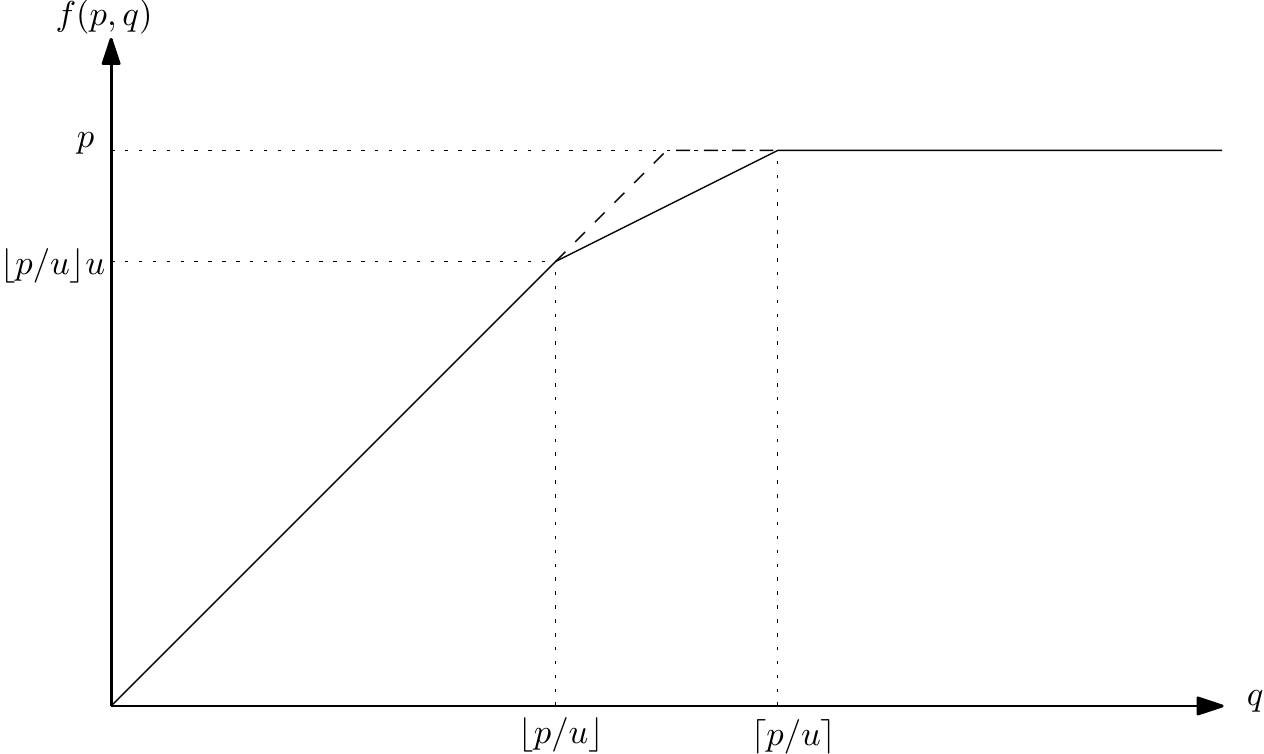}
\caption{The $f$ function for a fixed $p$ such that $p/u \notin \Z$. It contains three linear segments. The dashed line shows the function $f$ defined by $f = \min\set{p, qu}$.}
\label{figure:f}
\end{figure}

\begin{claim}
\label{lemma:concave}
Fixing $p \in \Z_*$, $f(p, \cdot)$ is a concave function on $\R_*$. Fixing $q \in \R_*$, $f(\cdot, q)$ is a concave function on $\Z_*$.
\end{claim}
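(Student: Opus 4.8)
The plan is to reduce both statements to one structural fact: for every fixed $p \in \Z_*$, the map $q \mapsto f(p,q)$ is the piecewise-linear interpolant, on the integer grid in $q$, of the sequence $q \mapsto \min\set{qu,\,p}$; and, read the other way, this exhibits $f$ as a fixed nonnegative combination of concave functions of $p$. Concretely, I will establish that for every $p \in \Z_*$ and every $q \in \R_*$, writing $n = \floor{q}$ and $\lambda = q - n \in [0,1)$,
\[
  f(p,q) \;=\; (1-\lambda)\,\min\set{nu,\,p} \;+\; \lambda\,\min\set{(n+1)u,\,p}.
\]

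Proving this identity is essentially the only computation in the argument, and amounts to matching it against the three regimes in the definition of $f$. If $n+1 \le \floor{p/u}$, then on $[n,n+1]$ the definition gives $f(p,q) = qu$, while $\min\set{nu,p} = nu$ and $\min\set{(n+1)u,p} = (n+1)u$ (both first arguments are at most $p$), so the right-hand side is $(1-\lambda)nu + \lambda(n+1)u = qu$. If $n \ge \ceil{p/u}$, then $f(p,q) = p$ and both minima equal $p$, so the right-hand side is $p$. The only genuinely interesting case is $n = \floor{p/u} < \ceil{p/u} = n+1$, which forces $p/u \notin \Z$; there the middle segment of the definition reads $u\floor{p/u} + u\lfrac{\frac pu}\,(q-n) = u\floor{p/u} + (p - u\floor{p/u})\lambda$, while $\min\set{nu,p} = u\floor{p/u}$ and $\min\set{(n+1)u,p} = p$, so the right-hand side matches term by term. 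The degenerate cases $p=0$ and $p/u\in\Z$ are subsumed by the first and third regimes, which is exactly why I phrase the identity this way.

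Given the identity, the first statement follows: for fixed $p$, set $a_n := \min\set{nu,\,p}$ for $n \in \Z_*$; the successive differences $a_{n+1}-a_n$ equal $u$ up to some index, then take a single value in $[0,u]$, then are $0$, hence are non-increasing. By the identity, $f(p,\cdot)$ is continuous at every integer $q$ and linear on each $[n,n+1]$ with slope $a_{n+1}-a_n$, i.e. it is precisely the piecewise-linear interpolant of $(a_n)$; a continuous piecewise-linear function with non-increasing slopes is concave on $\R_*$. For the second statement, fix $q \in \R_*$ with associated $n,\lambda$. Each of $p \mapsto \min\set{nu,\,p}$ and $p \mapsto \min\set{(n+1)u,\,p}$ is a minimum of two affine functions, hence concave; in particular its restriction to $\Z_*$ has non-positive second difference. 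Since $1-\lambda,\lambda \ge 0$, the function $f(\cdot,q)$ is a nonnegative combination of these two and therefore also has non-positive second difference on $\Z_*$, i.e. is concave on $\Z_*$. The only real obstacle is bookkeeping in the middle paragraph — ensuring the interpolation identity is correctly verified across the three regimes and that the boundary cases need no separate treatment; everything after that is immediate.
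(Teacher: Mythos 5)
Your proof is correct, and it takes a genuinely different route from the paper's. The paper handles the two claims separately: for concavity in $q$ it rewrites $f(p,q)$ as the pointwise minimum of three affine functions of $q$, and for concavity in $p$ it explicitly lists the three linear segments of $f(\cdot,q)$ and checks that their slopes $1 \geq \lfrac{q} \geq 0$ are non-increasing and that adjacent segments agree at the breakpoints. You instead prove a single structural identity,
\[
f(p,q) \;=\; (1-\lambda)\min\set{\floor{q}u,\,p} + \lambda\,\min\set{(\floor{q}+1)u,\,p}, \qquad \lambda = q - \floor{q},
\]
and derive both concavity statements from it: in $q$, this exhibits $f(p,\cdot)$ as the piecewise-linear interpolant of the concave integer sequence $n \mapsto \min\set{nu,p}$; in $p$, it exhibits $f(\cdot,q)$ as a convex combination (with fixed weights $1-\lambda,\lambda$) of two functions that are each a minimum of two affines, hence concave on $\Z_*$. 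The paper's approach is shorter and more direct for each claim in isolation; yours buys a unified viewpoint in which one computation serves both directions, at the mild cost of having to verify the interpolation identity case by case. Both arguments are complete and correct.
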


\begin{proof}
It is easy to see that $f(p, q) = \min\{p, uq, \allowbreak u \floor{p/u} + u\lfrac{p/u}(q - \floor{p/u})\}$. Fix $p$, all the three terms are linear functions of $q$; thus the minimum of the three is concave. 

Now we fix $q \in \R_*$. Then $f(p, q)=p$ if $p \leq u\floor{q}$, $f(p, q) = u\floor{q} + (p-u\floor{q})\lfrac{q}$ if $u\floor{q} < p < u\ceil{q}$, and $f(p, q) = uq$ if $p \geq u\ceil{q}$. All three segments are linear on $p$ and their gradients are $1, \lfrac{q}, 0$ respectively. The gradients are decreasing from left to right. Moreover, the first segment and the second segment agree on $p = u\floor{q}$; the second segment and the third segment agree on $p = u\ceil{q}$. Thus, $f(\cdot, q)$ is a concave function on $\Z_*$. 
\end{proof}

For any subset $\calB \subseteq \calF$ of facility locations and subset $\calJ \subseteq \calC$ of clients, define $y_{\calB} := y(\calB):= \sum_{i \in \calB} y_i$ and $x_{\calB, \calJ} = \sum_{i \in \calB, j \in \calJ}x_{i,j}$. We simply write $x_{i, \calJ}$ for $x_{\set{i}, \calJ}$ and $x_{\calB, j}$ for $x_{\calB, \set{j}}$.  By the definition of $f(p, q)$,  $\sum_{j \in \calJ}x_{\calB, j} \leq f(\cardinal{\calJ}, y_\calB)$ is valid for every $\calB \subseteq \calF$ and  $\calJ \subseteq \calC$. The constraint says that there can be at most $f(\cardinal{\calJ}, y_\calB)$ clients in $\calJ$ connected to facilities in $\calB$.   Notice the constraint with $\calB = \set{i}$ and $\calJ = \set{j}$ implies $x_{i,j} \leq f(1, y_i) \leq y_i$. The constraint with $\calB = \set{i}$ and $\calJ = \calC$ implies $\sum_{j \in \calC}x_{i,j} \leq f(\cardinal{\calC}, y_i) \leq uy_i$.  Thus,   Constraint~\eqref{LPC:connect-to-open}  and~\eqref{LPC:capacity} are implied.  The constraints of our rectangle LP are Constraint~\eqref{LPC:k-facilities},\eqref{LPC:client-must-connect},\eqref{LPC:xy-non-neg} and the new constraint:
\ifdefined \CR
\begin{align}
&\text{minimize }  \sum_{i \in \calF, j \in \calC}x_{i,j}d(i,j) \text{ s.t.} \tag{Rectangle LP}
\end{align}
\vspace*{-\abovedisplayskip}
\vspace*{-\belowdisplayskip}
\begin{align}
&x_{\calF,j} = 1,y_\calF \leq k, x_{i, j}, y_i \geq 0,&\  \forall i \in \calF, j \in \calC,
 \label{LPC:compact-simple} \\ 
 &x_{\calB, \calJ} \leq f(\cardinal{\calJ}, y_\calB),&\ \forall  \calB \subseteq \calF, \calJ \subseteq \calC.
 \label{LPC:compact}
\end{align}
\else
\begin{alignat}{2}
\text{minimize } &\sum_{i \in \calF, j \in \calC}x_{i,j}d(i,j) \text{ s.t.} \tag{Rectangle LP}\\
&x_{\calF,j} = 1,\  y_\calF \leq k,\ x_{i, j} \geq 0,\  y_i \geq 0, &\qquad \forall i \in \calF, j \in \calC, \label{LPC:compact-simple} \\ 
&x_{\calB, \calJ} \leq f(\cardinal{\calJ}, y_\calB), &\qquad \forall  \calB \subseteq \calF, \calJ \subseteq \calC. \label{LPC:compact}
\end{alignat}
\fi

The LP is called the rectangle LP since we have a constraint for every ``rectangle'' $(\calB \subseteq \calF, \calJ \subseteq \calC)$. We use the concavity of $f(p, \cdot)$ to convert Constraint~\eqref{LPC:compact} to linear constraints. Since $f(p, q)$ is the minimum of $p, uq$ and $u\floor{p/u} + u\lfrac{p/u}\left(q - \floor{p/u}\right)$, Constraint~\eqref{LPC:compact} is equivalent to a combination of three linear constraints. 

For a fixed $\calB \subseteq \calF$, the separation oracle for Constraint~\eqref{LPC:compact} is simple: for every $p \in [\cardinal{\calC}]$, we take the sum of the $p$ largest values in $\set{x_{\calB, j}: j \in \calC}$; if it is larger than $f(p, y_\calB)$ we find a separation.   Since there are exponential number of sets $\calB$, we do not know how to find a separation oracle for the Constraint~\eqref{LPC:compact} efficiently.  However, we can use the following standard trick: given $\set{x_{i,j} : i \in \calF, j \in \calC}$ and $\set{y_i : i \in \calF}$ satisfying Constraint~\eqref{LPC:compact-simple}, we either find a  rectangle $(\calB\subseteq \calF, \calJ \subseteq \calC)$ for which Constraint~\eqref{LPC:compact} is violated, or construct an integral solution with at most $\ceil{(1+\eps)k}$ facilities and the desired approximation ratio.  This is sufficient for us to run the ellipsoid method. 

We also remark that the definition of $f(p, q)$ for $\floor{p/u} < q < \ceil{p/u}$ is what makes the rectangle LP powerful. If we change the definition of $f(p, q)$ to $f(p, q) = \min\set{p, uq}$(see Figure~\ref{figure:f}), then the rectangle LP is equivalent to the basic LP.
\section{Rounding a Fractional Solution of the Rectangle LP}
\label{section:rounding}

Throughout this section,  let $\big(\set{x_{i,j}: i \in \calF, j \in \calC}, \allowbreak \set{y_i:i \in \calF}\big)$ be a fractional solution satisfying Constraints~\eqref{LPC:compact-simple}.  Let $\LP := \sum_{i \in \calF, j \in \calC}x_{i,j}d(i,j)$ be the cost of the fractional solution. We then try to round the fractional solution to an integral one with at most $\ceil{(1+\eps)k}$ open facilities.  We either claim the constructed integral solution has connection cost at most $\exp(O(1/\eps^2))\LP$, or output a rectangle $(\calB \subseteq \calF, \calJ \subseteq \calC)$ for which Constraint~\eqref{LPC:compact} is violated.   We can assume Constraint~\eqref{LPC:connect-to-open} and~\eqref{LPC:capacity} are satisfied by checking Constraint~\eqref{LPC:compact} for rectangles $(\set{i}, \set{j})$ and $(\set{i}, \calC)$ respectively.

Overall, the algorithm works as follows. Initially, we have 1 unit of demand at each client $j \in \calC$. During the execution of the algorithm, we move demands fractionally between clients. We pay a cost of $xd(j, j')$ for moving $x$ units of demand from client $j$ to client $j'$.  Suppose our final moving cost is $C$, and each client $j \in \calC$ has $\alpha_j$ units of demand.  Then we use the fact that $\calF = \calC$. We open $\ceil{\alpha_j/u}$ facilities at the location $j \in \calC = \calF$. By the integrality of matching, there is an integral matching between the $\calF$ and $\calC$, such that each $i \in \calF$ is matched at most $u\ceil{\alpha_i/u}$ times and each $j \in \calC$ is matched exactly once. The cost of the matching is at most $C$ (cost of matching $i$ and $j$ is $d(i, j)$). Thus our goal is to bound $C$ and $\sum_{j \in \calC}\ceil{\alpha_j/u}$.

\subsection{Moving Demands to Client Representatives}
In this section, we define a subset of clients called \emph{client representatives} (representatives for short) and move all demands to the representatives. The definition of client representatives is similar to that of Charikar and Li \cite{CL12}. 

Let $\dav(j) = \sum_{i \in \calF}x_{i, j}d(i, j)$ be the connection cost of $j$, for every client $j \in \calC$. Then $\LP = \sum_{j \in \calC}\dav(j)$.  Let $\ell = \Theta(1/\epsilon)$ be an integer whose value will be decided later.  Let $\calC^* = \emptyset$ initially. Repeat the following process until $\calC$ becomes empty. We select the client $v \in \calC$ with the smallest $\dav(v)$ and add it to $\calC^*$. We remove all clients $j$ such that $d(j, v) \leq 2\ell \dav(j)$ from $\calC$ (thus, $v$ itself is removed). Then the final set $\calC^*$ is the set of client representatives. We shall use $v$ and its derivatives to index representatives, and $j$ and its derivatives to index general clients. 

We partition the set $\calF$ of locations according to their nearest representatives in $\calC^*$.  Let $\calU_v = \emptyset$ for every $v \in \calC^*$ initially. For each location $i \in \calF$, we add $i$ to $\calU_v$ for the $v \in \calC^*$ that is closest to $i$. Thus, $\set{\calU_v:v \in \calC^*}$ forms a Voronoi diagram of $\calF$ with centers being $\calC^*$. For any subset $\calA \subseteq \calC^*$ of representatives, we use $\calU_{\calA} = \union_{v \in \calA} \calU_v$ to denote the union of Voronoi regions with centers in $\calA$.

\begin{claim} \label{claim:cluster-centers}
The following statements hold:
\begin{enumerate}[label=(C\arabic*),labelindent=*, leftmargin=*]
\item for all $v, v' \in \calC^*, v \neq v'$, we have $d(v, v') > 2\ell \max\set{\dav(v), \dav(v')}$; \label{property:representatives-far-away}
\item for all $j \in \calC$, there exists $v \in \calC^*$, such that $\dav(v) \leq \dav(j)$ and $d(v, j) \leq 2\ell\dav(j)$; \label{property:near-a-representative}
\item $y(\calU_v) \geq 1-1/\ell$ for every $v \in \calC^*$; \label{property:bundle-large} 
\item for any $v \in \calC^*$, $i \in \calU_v$ and $j \in \calC$, we have $d(i, v) \leq d(i, j) + 2\ell\dav(j)$. \label{property:moving-to-centers}
\end{enumerate}
\end{claim}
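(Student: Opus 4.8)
The plan is to derive all four properties directly from the greedy construction of $\calC^*$, the definition of the Voronoi regions $\set{\calU_v : v \in \calC^*}$, and one averaging argument. I would prove \ref{property:representatives-far-away} and \ref{property:near-a-representative} first since they only use the selection rule, then use \ref{property:representatives-far-away} to get \ref{property:bundle-large}, and finally combine \ref{property:near-a-representative} with the Voronoi definition to get \ref{property:moving-to-centers}.

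For \ref{property:representatives-far-away}, consider $v, v' \in \calC^*$ with $v \ne v'$, and assume $v$ was added to $\calC^*$ before $v'$. At that iteration $v$ minimized $\dav$ over the clients still present, and $v'$ was among them, so $\dav(v) \le \dav(v')$ and hence $\max\set{\dav(v),\dav(v')} = \dav(v')$. When $v$ was added we deleted every client $j$ with $d(j,v) \le 2\ell\dav(j)$; since $v'$ survived, $d(v',v) > 2\ell\dav(v') = 2\ell\max\set{\dav(v),\dav(v')}$. For \ref{property:near-a-representative}, fix any $j$ in the original client set. At some iteration $j$ is removed from $\calC$: either $j$ is itself picked as the representative $v$ (then $\dav(v)=\dav(j)$ and $d(v,j)=0$), or $j$ is deleted because of the representative $v$ chosen that iteration, i.e.\ $d(j,v) \le 2\ell\dav(j)$, and since $v$ minimized $\dav$ over the surviving clients, which included $j$, we get $\dav(v) \le \dav(j)$.

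For \ref{property:bundle-large}, fix $v \in \calC^*$. Since $\sum_{i\in\calF} x_{i,v} = 1$ and $\dav(v) = \sum_{i}x_{i,v}d(i,v)$, Markov's inequality gives $\sum_{i : d(i,v) > \ell\dav(v)} x_{i,v} \le 1/\ell$, so $\sum_{i : d(i,v) \le \ell\dav(v)} x_{i,v} \ge 1 - 1/\ell$. It then suffices to show every such $i$ lies in $\calU_v$. If some $v' \ne v$ had $d(i,v') \le d(i,v) \le \ell\dav(v)$, then $d(v,v') \le d(i,v) + d(i,v') \le 2\ell\dav(v)$, contradicting \ref{property:representatives-far-away}; so $v$ is the closest representative to $i$ and $i \in \calU_v$. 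Using $x_{i,v} \le y_i$ (Constraint~\eqref{LPC:connect-to-open}, which we may assume), $y(\calU_v) \ge \sum_{i : d(i,v) \le \ell\dav(v)} y_i \ge \sum_{i : d(i,v) \le \ell\dav(v)} x_{i,v} \ge 1 - 1/\ell$. Finally, for \ref{property:moving-to-centers}, fix $v \in \calC^*$, $i \in \calU_v$, and $j \in \calC$. By \ref{property:near-a-representative} there is $v' \in \calC^*$ with $d(v',j) \le 2\ell\dav(j)$; since $i \in \calU_v$ means $v$ is the closest representative to $i$, $d(i,v) \le d(i,v') \le d(i,j) + d(j,v') \le d(i,j) + 2\ell\dav(j)$.

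None of these steps is deep, but the point that needs care is \ref{property:bundle-large}: the ball of radius $\ell\dav(v)$ around $v$ must be forced inside $\calU_v$, and this works precisely because the separation in \ref{property:representatives-far-away} has the factor $2\ell$ rather than $\ell$ (so a facility within distance $\ell\dav(v)$ of $v$ cannot be within that distance of another representative), together with the averaging bound and the inequality $x_{i,v}\le y_i$. So I would treat \ref{property:bundle-large} as the main step and the other three as quick consequences of the construction.
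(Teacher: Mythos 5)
Your proof is correct and follows essentially the same approach as the paper: (C1) and (C2) from the greedy selection rule, (C3) by a Markov/averaging argument combined with the observation that the ball of radius $\ell\dav(v)$ around $v$ lies in $\calU_v$ (which is exactly where the factor $2\ell$ in (C1) is used), and (C4) by combining (C2) with the Voronoi property. The only difference is that you spell out the containment $\set{i : d(i,v)\le\ell\dav(v)}\subseteq\calU_v$ that the paper asserts in one clause, which is a reasonable expansion rather than a different route.
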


\begin{proof}
First consider Property~\ref{property:representatives-far-away}.  Assume $\dav(v) \leq \dav(v')$. When we add $v$ to $\calC^*$, we remove all clients $j$ satisfying $d(v, j) \leq 2\ell\dav(j)$ from $\calC$. Thus, $v'$ can not be added to $\calC^*$ later.

For Property~\ref{property:near-a-representative}, just consider the iteration in which $j$ is removed from $\calC$.  The representative $v$ added to $\calC^*$ in the iteration satisfy the property.

Then consider Property~\ref{property:bundle-large}. By Property~\ref{property:representatives-far-away}, we have $\calB:=\set{i \in \calF:d(i, v) \leq \ell\dav(v)} \subseteq \calU_v$. Since $\dav(v)=\sum_{i \in \calF}x_{i,v}d(i,v)$ and $\sum_{i \in \calF}x_{i,v} = 1$, we have $\dav(v) \geq (1-x_{\calB, v})\ell\dav(v)$, implying $y(\calU_v) \geq y_\calB \geq x_{\calB, v} \geq 1-\frac1\ell$, due to Constraint~\eqref{LPC:connect-to-open}.

Finally, consider Property~\ref{property:moving-to-centers}. By Property~\ref{property:near-a-representative}, there is a client $v' \in \calC^*$ such that $\dav(v') \leq \dav(j)$  and $d(v', j) \leq 2\ell \dav(j)$. Notice that $d(i, v) \leq d(i, v')$ since $v' \in \calC^*$ and $i$ was added to $\calU_v$. Thus, $d(i, v) \leq d(i, v') \leq d(i, j) + d(j, v') \leq d(i, j) + 2\ell\dav(j)$.
\end{proof}

Now, we move demands to $\calC^*$. For every representative $v \in \calC^*$, every location $i \in \calU_v$ and every client $j \neq v$ such that $x_{i,j} > 0$, we move $x_{i,j}$ units of demand from $j$ to $v$. We bound the moving cost:

\begin{lemma}\label{lemma:moving-to-centers}
The total cost of moving demands in the above step is at most $2(\ell + 1)\LP$.
\end{lemma}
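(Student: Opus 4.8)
The plan is to charge the cost of each unit of demand moved to the average connection cost of the client it originates from. Fix a representative $v \in \calC^*$, a location $i \in \calU_v$, and a client $j \neq v$ with $x_{i,j} > 0$; we move $x_{i,j}$ units from $j$ to $v$, paying $x_{i,j} \cdot d(j, v)$. By the triangle inequality $d(j, v) \leq d(j, i) + d(i, v)$, and by Property~\ref{property:moving-to-centers} we have $d(i, v) \leq d(i, j) + 2\ell \dav(j)$. Hence $d(j, v) \leq 2 d(i, j) + 2\ell \dav(j) \leq 2(\ell+1) d(i,j) + 2\ell\dav(j)$; I would actually keep the cleaner bound $d(j,v) \le 2d(i,j) + 2\ell\,\dav(j)$.

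Next I would sum this over all the triples that trigger a move. For a fixed client $j \neq v$, the contribution to the moving cost from $j$ is at most
\[
\sum_{i \in \calF} x_{i,j}\bigl(2 d(i,j) + 2\ell\,\dav(j)\bigr) = 2\sum_{i\in\calF} x_{i,j} d(i,j) + 2\ell\,\dav(j)\sum_{i\in\calF} x_{i,j} = 2\dav(j) + 2\ell\,\dav(j) = 2(\ell+1)\dav(j),
\]
where I used $\sum_{i \in \calF} x_{i,j} = 1$ (Constraint~\eqref{LPC:client-must-connect}) and the definition $\dav(j) = \sum_i x_{i,j} d(i,j)$. Note every location $i$ lies in exactly one Voronoi region $\calU_v$, so each term $x_{i,j}$ is counted exactly once across all representatives $v$; there is no double counting. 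Summing over all $j \in \calC$ (the representatives $v$ themselves move nothing, and clients $j=v$ are excluded, which only helps) gives a total moving cost of at most $\sum_{j \in \calC} 2(\ell+1)\dav(j) = 2(\ell+1)\LP$.

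The only subtlety, and the step I would be most careful about, is bookkeeping: making sure that the sum over $(v, i, j)$ triples correctly collapses to a sum over $(i,j)$ pairs (using that $\{\calU_v\}$ partitions $\calF$) and then to a sum over $j$ (using $\sum_i x_{i,j} = 1$), and that the factor of $2$ in front of $d(i,j)$ does not interact badly with anything — it does not, since we never need $x_{i,j} \le y_i$ here, only that the $x_{i,j}$ for fixed $j$ form a probability distribution over $\calF$. There is no real obstacle; this lemma is a straightforward triangle-inequality accounting argument, with Property~\ref{property:moving-to-centers} doing all the geometric work.
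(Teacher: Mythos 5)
Your proof is correct and follows essentially the same route as the paper's: bound $d(j,v)$ by the path through $i$ via the triangle inequality, apply Property~\ref{property:moving-to-centers} to get $d(j,v) \le 2d(i,j) + 2\ell\,\dav(j)$, then collapse the triple sum using that $\{\calU_v\}$ partitions $\calF$ and $\sum_i x_{i,j}=1$. The intermediate loosening to $2(\ell+1)d(i,j)+2\ell\,\dav(j)$ that you mention in passing is unnecessary, but you correctly discard it, so the argument is fine as written.
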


\begin{proof}
The cost is bounded by
\ifdefined \CR
\begin{align*}
&\quad\sum_{v \in \calC^*}\sum_{i \in \calU_v}\sum_{j \in \calC}x_{i,j} (d(j, i) + d(i, v))\\
&\leq \sum_{v \in \calC^*}\sum_{i \in \calU_v}\sum_{j \in \calC}x_{i,j} (2d(i, j) + 2\ell\dav(j))\\
&= 2\sum_{j \in \calC} \sum_{v \in \calC^*, i \in \calU_v} x_{i,j}\left(d(i,j) + \ell\dav(j)\right)\\ &=2\sum_{j \in \calC} \left(\dav(j) + \ell\dav(j)\right)= 2(\ell+1)\LP.
\end{align*}
\else
\begin{align*}
&\quad\sum_{v \in \calC^*}\sum_{i \in \calU_v}\sum_{j \in \calC}x_{i,j} (d(j, i) + d(i, v)) \leq \sum_{v \in \calC^*}\sum_{i \in \calU_v}\sum_{j \in \calC}x_{i,j} (2d(i, j) + 2\ell\dav(j))\\
&= 2\sum_{j \in \calC} \sum_{v \in \calC^*, i \in \calU_v} x_{i,j}\left(d(i,j) + \ell\dav(j)\right) =2\sum_{j \in \calC} \left(\dav(j) + \ell\dav(j)\right)= 2(\ell+1)\LP.
\end{align*}
\fi
The inequality is by Property~\ref{property:moving-to-centers}. The second equality used the fact that $\set{\calU_v : v \in \calC^*}$ form a partition of $\calF$,  $\sum_{i \in \calF}x_{i,j} = 1$ and $\sum_{i \in \calF}x_{i,j}d(i, j) = \dav(j)$.
\end{proof}

After the moving operation, all demands are at the set $\calC^*$ of representatives.  Every representative $v \in \calC^*$ has $\sum_{i \in \calU_v}\sum_{j \in \calC}x_{i,j}$ units of demand. Let $y'_i := \frac{\sum_{j \in \calC}x_{i,j}}{u}$ for any facility location $i \in \calF$.  Since Constraint~\eqref{LPC:capacity} holds, we have $y'_i \leq y_i$. Define $y'_{\calB}:=y'(\calB):=\sum_{i \in \calB}y'_i = \frac{\sum_{j \in {\calC}}x_{\calB, j}}{u}$ for every $\calB \subseteq \calF$. Obviously $y'_\calB \leq y_\calB$. The amount of demand at $v \in \calC^*$ is $\sum_{i \in \calU_v}uy'_i = uy'(\calU_v)$.   

We have obtained an $O(1)$ approximation with $2k$ open facilities: we set $\ell = 2$ and open $\ceil{y'(\calU_v)}$ facilities at each location $v \in \calC^* \subseteq \calC = \calF$.  By Lemma~\ref{lemma:moving-to-centers}, the connection cost is at most $2(\ell + 1)\LP = 6\LP$. The number of open facilities is at most $2k$, as $\max_{v \in \calC^*} \frac{\ceil{y'(\calU_v)}}{y(\calU_v)} \leq \max_{y \geq 1-1/\ell}\frac{\ceil{y}}{y} \leq 2$. No matter how large $\ell$ is, the  bound is tight as $\frac{\ceil{1+\eps}}{1+\eps}$ approaches 2. This is as expected since we have not used Constraint~\eqref{LPC:compact}. In order to improve the factor of $2$, we further move demands between client representatives. 

\subsection{Bounding cost for moving demands out of a set}
Suppose we are given a set $\calA \subseteq \calC^*$ of representatives such that $d(\calA, \calC^* \setminus \calA)$ is large. If $\ceil{y'(\calU_\calA)}/y(\calU_\calA)$ is large then we can not afford to open $\ceil{y'(\calU_\calA)}$ open facilities inside $\calA$. (Recall that $\calU_\calA = \union_{v \in \calA}\calU_v$ is the union of Voronoi regions with centers in $\calA$.)  Thus, we need to move demands between $\calA$ and $\calC^* \setminus \calA$. The goal of this section is to bound $d(\calA, \calC^* \setminus \calA)$; this requires Constraint~\eqref{LPC:compact}.

To describe the main lemma, we need some notations.  Let $D_i = \sum_{j \in \calC}x_{i,j}d(i, j)$ and $D'_i = \sum_{j \in \calC}x_{i,j}\dav(j)$ for any location $i \in \calF$. Let $D_{\calF'} := D(\calF') := \sum_{i \in \calF'}D_i$ and $D'_{\calF'} := D'(\calF') := \sum_{i \in \calF'}D'_i$ for every subset $\calF' \subseteq \calF$ of locations. It is easy to see that $\LP = D_\calF = D'_\calF$; this fact will be used to bound the total moving cost.  The main lemma we prove in this section is the following.
\begin{lemma}
\label{lemma:bounding-moving-cost}
Let $\emptyset \subsetneq {\calA} \subsetneq \calC^*$ and $\calS = \calU_\calA$.   Suppose $y'_\calS \geq \floor{y_\calS}$ and Constraint~\eqref{LPC:compact} holds for $\calB = \calS$ and every $\calJ \subseteq \calC$. Then, 
\begin{align*}
\lfrac{y'_\calS}\rfrac{y_\calS} d(\calA, \calC^* \setminus \calA) \leq \frac{4}{u}D_\calS + \frac{4\ell + 2}{u}D'_\calS.
\end{align*}
\end{lemma}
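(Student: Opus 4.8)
My plan is to use Constraint~\eqref{LPC:compact} for the rectangle $(\calS, \calJ)$ where $\calJ$ is chosen to be the set of the $p$ clients closest to $\calA$, for a suitable integer $p$, and to exploit the middle linear segment of $f(p,\cdot)$ — the segment whose slope $u\lfrac{p/u}$ is strictly less than $u$. The quantity $\lfrac{y'_\calS}\rfrac{y_\calS}$ on the left is exactly the ``loss'' that appears when $y'_\calS$ sits strictly between $\floor{y_\calS}$ and $\ceil{y_\calS}$, so the natural choice is $p$ with $\floor{p/u}=\floor{y_\calS}$, i.e. $p$ chosen so that $y_\calS$ (and $y'_\calS$, which lies in $[\floor{y_\calS},y_\calS]$ by hypothesis) falls in the open interval $(\floor{p/u},\ceil{p/u})$. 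Recall $uy'_\calS=\sum_{j\in\calC}x_{\calS,j}$ is the total demand currently routed into $\calS$; I will take $p=\ceil{uy'_\calS}$ or thereabouts, so that $f(p,y_\calS)$ is evaluated on the middle segment and $f(p,y'_\calS)=uy'_\calS=\sum_j x_{\calS,j}$ is tight on that segment at $q=y'_\calS$.

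The key estimate is then a comparison between $x_{\calS,\calJ}$ and $x_{\calS,\calC}=uy'_\calS$: the clients in $\calC\setminus\calJ$ are by construction far from $\calA$, so each unit of $x_{i,j}$ with $i\in\calS=\calU_\calA$ and $j\notin\calJ$ is ``expensive'' — we can lower bound $d(i,j)$, hence $D_i$ and $D'_i$, using Property~\ref{property:moving-to-centers} (for $i\in\calU_v$, $v\in\calA$: $d(i,v)\le d(i,j)+2\ell\dav(j)$, so $d(i,j)\ge d(i,v)-2\ell\dav(j)\ge d(\calA,\calC^*\setminus\calA)-(\text{something})-2\ell\dav(j)$ once we also account for how far $j$ is from its own representative). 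Meanwhile Constraint~\eqref{LPC:compact} forces $x_{\calS,\calJ}\le f(\cardinal{\calJ},y_\calS)$, and on the middle segment $f(\cardinal\calJ,y_\calS)=u\floor{y_\calS}+u\lfrac{y_\calS}\bigl(y'_\calS\text{-ish terms}\bigr)$; subtracting $x_{\calS,\calJ}$ from $x_{\calS,\calC}=uy'_\calS$ lower-bounds $x_{\calS,\calC\setminus\calJ}$ by exactly a term of the form $u\lfrac{y'_\calS}\rfrac{y_\calS}$ (this is the place where the precise shape of $f$ does the work, and where choosing $p$ to make both $f(p,y_\calS)$ and the value $uy'_\calS$ sit on the middle segment is essential). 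Combining ``$x_{\calS,\calC\setminus\calJ}$ is at least $u\lfrac{y'_\calS}\rfrac{y_\calS}$'' with ``every such unit of flow costs at least roughly $\frac12 d(\calA,\calC^*\setminus\calA)$ in $d(i,j)+2\ell\dav(j)$-units'' and summing against $D_\calS=\sum_{i\in\calS}D_i$ and $D'_\calS$ gives the claimed inequality $\lfrac{y'_\calS}\rfrac{y_\calS}d(\calA,\calC^*\setminus\calA)\le \frac4u D_\calS+\frac{4\ell+2}{u}D'_\calS$, the constants $4$ and $4\ell+2$ coming from the factors $2$ in $d(i,v)\le d(i,j)+d(j,v)$ and the $2\ell\dav(j)$ slack.

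The main obstacle, I expect, is getting the combinatorics of $\calJ$ and $p$ to line up so that the interpolation segment of $f$ is the relevant one and the algebra produces exactly $\lfrac{y'_\calS}\rfrac{y_\calS}$ rather than a weaker bound: I need $\floor{y'_\calS}=\floor{y_\calS}$ (guaranteed by $y'_\calS\ge\floor{y_\calS}$ and $y'_\calS\le y_\calS$), I need $\cardinal\calJ$ chosen with $\floor{\cardinal\calJ/u}=\floor{y_\calS}$, and I need to argue that the $\cardinal\calJ$ clients nearest $\calA$ really do absorb all the ``cheap'' flow, so that the leftover flow $x_{\calS,\calC\setminus\calJ}$ is both large (from Constraint~\eqref{LPC:compact}) and expensive (from the distance lower bound). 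A secondary subtlety is that clients in $\calC^*\setminus\calA$ are the natural far targets, but a general client $j\notin\calJ$ might be close to $\calA$ while having large $\dav(j)$; Property~\ref{property:representatives-far-away} and~\ref{property:near-a-representative} are what let me route every far-from-$\calA$ contribution through a representative in $\calC^*\setminus\calA$ and pick up the $d(\calA,\calC^*\setminus\calA)$ term, at the price of the additive $\dav$ slack that becomes the $D'_\calS$ term. If the demand $uy'_\calS$ happens to land outside the interior $(\floor{y_\calS},\ceil{y_\calS})$ — e.g. $y'_\calS$ is an integer — the left side degenerates ($\lfrac{y'_\calS}=0$) and the statement is trivial, so that boundary case needs only a one-line remark.
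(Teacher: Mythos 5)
Your structural instinct is right about one thing: the lower bound you want, $x_{\calS,\calC\setminus\calJ}\ge u\lfrac{y'_\calS}\rfrac{y_\calS}$, can indeed be squeezed out of Constraint~\eqref{LPC:compact} applied to a single well-chosen rectangle with $\floor{\cardinal\calJ/u}=\floor{y_\calS}$, and the ``middle segment of $f$'' is exactly where the tension lives. (Though to get the tight algebra you'd want $\calJ$ to be the $p$ clients with the \emph{largest} $x_{\calS,j}$, not the $p$ nearest to $\calA$; the former is what makes $x_{\calS,\calJ}\le f(p,y_\calS)$ the binding choice.)

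The gap is in the second half: ``every such unit of flow costs at least roughly $\tfrac12 d(\calA,\calC^*\setminus\calA)$ in $d(i,j)+2\ell\dav(j)$-units.'' No geometric choice of $\calJ$ gives you this, because being outside $\calJ$ does not force a client to be expensive. Concretely: let $\calA=\{v\}$ with $\calU_v$ a single facility $i$, put $u$ clients at tiny distance $\delta>0$ from $v$, each fully routed to $i$ ($x_{\calS,j}=1$, $\dav(j)=\delta$), and put another $u$ clients at distance $0$ from $v$, each half routed to $i$ and half to a far facility at distance $D\gg\delta$. Then $y'_\calS=y_\calS=1.5$, the rectangle constraint is satisfied, and the $p=1.5u$ clients nearest $\calA$ are the $u$ split clients plus $0.5u$ of the fully-routed ones. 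The remaining $0.5u$ fully-routed clients land in $\calC\setminus\calJ$ yet have $d(i,j)+2\ell\dav(j)=(1+2\ell)\delta\approx 0$. So $x_{\calS,\calC\setminus\calJ}\ge 0.25u$ as needed, but the per-unit cost is negligible, and your bound fails. The lemma is of course still true here; the large contribution to $D'_\calS$ comes from the split clients \emph{inside} $\calJ$, which your accounting never touches.

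The paper sidesteps this entirely via Lemma~\ref{lemma:xb-times-xbb-large}: instead of partitioning clients into a cheap set $\calJ$ and an expensive remainder, it lower bounds the total \emph{split flow} $\sum_j x_{\calS,j}(1-x_{\calS,j})\ge u\lfrac{y'_\calS}\rfrac{y_\calS}$ (this is where all the $f$-interpolation work happens, abstracted into a technical lemma about concave functions applied to the rearranged vector $\bar f(j)-\bar f(j-1)$). That quantity expands as $\sum_{j,i\in\calS,i'\notin\calS}x_{i,j}x_{i',j}$, and for each such triple the chain $d(\calA,\calC^*\setminus\calA)\le d(v,v')\le 2d(v,i')\le 2(d(v,i)+d(i,j)+d(j,i'))\le 2(2d(i,j)+2\ell\dav(j)+d(i',j))$ pairs a facility in $\calS$ with a facility outside $\calS$ \emph{through the same client $j$}. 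A split client automatically pays: either the $\calS$-side piece $x_{i,j}d(i,j)$ and $x_{i,j}\dav(j)$ feed into $D_\calS,D'_\calS$, or the outside piece $x_{i',j}d(i',j)\le\dav(j)$ does (weighted by $x_{\calS,j}$). This is what your approach cannot recover: a fully-routed, nearby client has zero split flow and is correctly ignored, whereas your accounting can mistakenly charge it. If you want to salvage your route, you would need to replace ``$p$ clients nearest $\calA$'' with something that tracks $\min\{x_{\calS,j},1-x_{\calS,j}\}$, at which point you are essentially reinventing the split-flow quantity.
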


We explain why this bound gives what we need. We can open $\floor{y'_\calS}$ facilities in $\calA$ and move $u\lfrac{y'_\calS}$ units of demand from $\calA$ to some close representatives in $\calC^* \setminus \calA$. If we guarantee that the moving distance is roughly $d(\calA, \calC^* \setminus \calA)$, then the moving cost is $u\lfrac{y'_\calS}d(\calA, \calC^* \setminus \calA)$. When $\rfrac{y_\calS}$ is not too small, the cost is bounded in terms of $D_\calS + D'_\calS$. On the other hand, if $\rfrac{y_\calS}$ is very small, we can simply open $\ceil{y_\calS}$ facilities in $\calA$ as $\ceil{y_\calS}/y_\calS$ is close to 1. 

The proof of Lemma~\ref{lemma:bounding-moving-cost} requires the following lemma, which directly uses the power of Constraint~\eqref{LPC:compact}. As the lemma is very technical, we defer its proof to Section~\ref{subsec:omitted-proofs}.  We shall prove Lemma~\ref{lemma:bounding-moving-cost} assuming Lemma~\ref{lemma:xb-times-xbb-large}.

\begin{lemma}\label{lemma:xb-times-xbb-large}
Suppose $(\set{x_{i,j}:i \in \calF, j \in \calC}, \set{y_i:i \in \calF})$ satisfies Constraint~\eqref{LPC:compact} for some set $\calB \subseteq \calF$ and every $\calJ \subseteq \calC$. Moreover, suppose $y'_\calB \geq \floor{y_\calB}$. Then
\begin{align}
\sum_{j \in \calC}x_{\calB,j}(1-x_{\calB,j}) \geq u\lfrac{y'_\calB}\rfrac{y_\calB}. \label{equ:xb-times-xbb-large}
\end{align}
\end{lemma}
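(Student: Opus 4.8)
The key object to analyze is the sum $\sum_{j \in \calC} x_{\calB,j}(1-x_{\calB,j})$, which we want to lower-bound by $u\lfrac{y'_\calB}\rfrac{y_\calB}$. Write $a_j := x_{\calB,j} \in [0,1]$ (it's at most $1$ because Constraint~\eqref{LPC:compact} for the rectangle $(\calB,\set{j})$ gives $x_{\calB,j}\le f(1,y_\calB)\le 1$). The quantity $\sum_j a_j(1-a_j) = \sum_j a_j - \sum_j a_j^2$ is minimized, for a fixed value of the total $S := \sum_j a_j = x_{\calB,\calC} = u\,y'_\calB$, when the $a_j$ are as "extreme" as possible, i.e.\ as many as possible equal to $1$ and the rest $0$. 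But Constraint~\eqref{LPC:compact} forbids too many of the $a_j$ from being large: for any subset $\calJ$ of clients, $\sum_{j\in\calJ} a_j \le f(\cardinal{\calJ}, y_\calB)$. So the extreme configuration is not available, and the slack is exactly what we need. Concretely, I would sort the clients so that $a_1 \ge a_2 \ge \cdots$, let $t := \lfloor S/u \rfloor = \lfloor y'_\calB \rfloor$, and apply the constraint with $\calJ$ = the top $p$ clients for suitable values of $p$ to pin down the shape of the optimal (minimizing) profile.

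First I would reduce to the worst-case profile. Fix $S = u\,y'_\calB$. Among all vectors $(a_j)$ with $a_j\in[0,1]$, $\sum_j a_j = S$, and $\sum_{j\in\calJ}a_j\le f(\cardinal{\calJ},y_\calB)$ for all $\calJ$, minimize $\sum_j a_j(1-a_j)$. Since $x\mapsto x(1-x)$ is concave, the minimum over this polytope is attained at a vertex; by a standard exchange argument the minimizing profile is "staircase": some number $m$ of coordinates equal $1$, one fractional coordinate, and the rest $0$, subject to the prefix constraints $\sum_{p \text{ largest}} a_j \le f(p,y_\calB)$. The binding constraint is at $p = t+1$ where $t=\floor{y'_\calB}$: since $y'_\calB \ge \floor{y_\calB}$, one checks $t = \floor{y'_\calB}$ satisfies $\floor{y_\calB} \le t$ hmm — more carefully, one uses that $f(p,y_\calB)$ as a function of $p$ has a kink at $p = u\floor{y_\calB}$ and $p = u\ceil{y_\calB}$, with the middle segment having slope $\lfrac{y_\calB}$; the constraint at the top $u\ceil{y_\calB}$ clients is $f(u\ceil{y_\calB}, y_\calB) = uy_\calB$ (when $y_\calB \le \ceil{y_\calB}$, which always holds), but the sharper one caps the top $u\lceil y_\calB\rceil$ at... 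Let me instead just say: the constraint forces that if we try to put mass $1$ on as few clients as possible, we can fill at most $u\floor{y_\calB}$ of them fully, and then the next block of $u$ clients can receive total mass only $u\lfrac{y_\calB}$ rather than $u$.

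The calculation then goes as follows. In the minimizing profile, set the first $u\floor{y_\calB}$ clients to $a_j = 1$ (allowed, since $f(u\floor{y_\calB}, y_\calB) = u\floor{y_\calB}$ as $\floor{y_\calB} \le \floor{p/u}$... again modulo checking the branch of $f$), contributing $0$ to $\sum a_j(1-a_j)$. The remaining mass is $S - u\floor{y_\calB} = u(y'_\calB - \floor{y_\calB})$, and — here's the point — it must be spread over the next block in a way respecting that the top $u\floor{y_\calB}+u$ clients have total $\le u\floor{y_\calB} + u\lfrac{y_\calB}$, hence the next $u$ clients carry total $\le u\lfrac{y_\calB}$; distributing $u(y'_\calB-\floor{y_\calB})$ units over $u$ clients each capped at $\lfrac{y_\calB} < 1$, the concave sum $\sum a_j(1-a_j)$ over this block is minimized by making them as unequal as possible, i.e.\ setting as many as possible to $\lfrac{y_\calB}$. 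The number needed is $\frac{u(y'_\calB-\floor{y_\calB})}{\lfrac{y_\calB}}\cdot$(value) — working this out, the block contributes at least $u\,(y'_\calB-\floor{y_\calB})(1-\lfrac{y_\calB}) = u\lfrac{y'_\calB}\rfrac{y_\calB}$ when $\floor{y'_\calB}=\floor{y_\calB}$, and one handles $\floor{y'_\calB} < \floor{y_\calB}$ (forced to be $\floor{y'_\calB} = \floor{y_\calB}$ anyway by $y'_\calB \le y_\calB$, or one extra step) separately. Cleaning up the $f$-branch bookkeeping and the edge cases $\lfrac{y_\calB}\in\{0\}$ or $y'_\calB\in\Z$ finishes it.

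\medskip

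\textbf{Main obstacle.} The real work — and the likely source of a long technical argument deferred to Section~\ref{subsec:omitted-proofs} — is the careful bookkeeping of which linear branch of $f(p, y_\calB)$ is active for each prefix length $p$, and arguing rigorously (via the concavity of $x\mapsto x(1-x)$ plus an exchange/vertex argument on the prefix-constrained polytope) that the "staircase" profile with a full block of $u\floor{y_\calB}$ ones followed by a block of values $\lfrac{y_\calB}$ really is the minimizer. The inequality $y'_\calB \ge \floor{y_\calB}$ is exactly the hypothesis needed to rule out a degenerate regime (where $y'_\calB < \floor{y_\calB}$, the first block cannot even be filled and the bound would fail). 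I expect the clean way to present it is: reduce to the extreme profile by concavity, read off the two relevant constraints (top $u\floor{y_\calB}$ clients, top $u\lceil y_\calB\rceil$ clients), and then it is a one-line optimization of a concave function over an interval.
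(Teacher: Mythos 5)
Your high-level plan matches the paper's: identify that $\sum_j x_{\calB,j}(1-x_{\calB,j})$, for fixed $\sum_j x_{\calB,j}=uy'_\calB$, is minimized subject to the prefix constraints at the ``greedy/staircase'' profile ($u\floor{y_\calB}$ coordinates equal to $1$, then a block of coordinates equal to $\lfrac{y_\calB}$, one smaller fractional coordinate, then zeros), and then evaluate $g(x)=x(1-x)$ there. Your final bookkeeping also lands on the same value $u\lfrac{y'_\calB}\rfrac{y_\calB}$, and you correctly identify why the hypothesis $y'_\calB\geq\floor{y_\calB}$ is needed. Where you diverge is in \emph{how} you justify that the staircase profile is the minimizer: you invoke a vertex/exchange argument (``concave objective $\Rightarrow$ minimum at a vertex,'' then assert the vertices are staircases), whereas the paper never reasons about vertices at all. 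The paper proves, for \emph{every} concave $g$ with $g(0)=0$, that $\sum_j g(x_j)\geq\sum_j g(x^*_j)$ with $x^*_j=\bar f(j)-\bar f(j-1)$, via the integral identity $\sum_j g(x_j)=g'(0)\int_0^1\psi+\int_0^1 g''(t)\bigl(\int_t^1\psi\bigr)dt$ and the observation that $\int_t^1\psi(x)\,dx\leq\max_p(\bar f(p)-tp)$ with equality at $x^*$. That majorization argument is self-contained and avoids characterizing the polytope. Your route is viable (the feasible region is a polymatroid base polytope, so its vertices really are the greedy vectors), but you do not carry out the exchange argument, and your first description of the vertex shape (``$m$ ones, one fractional, the rest $0$'') is wrong — you notice this and pivot mid-proof, but as written the key structural claim is asserted rather than proved. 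So: same extremal profile and same final calculation, but a genuinely different and, as presented, incomplete justification of extremality.
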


To get an intuition about Inequality~\eqref{equ:xb-times-xbb-large}, let us assume $y'_\calB = y_\calB \neq \Z$ and $uy_\calB \in \Z$. Thus, $\calB$ serves $uy'_\calB = uy_\calB$ fractional clients. Without Constraint~\eqref{LPC:compact}, it can happen that $B$ serves $uy_\calB$ integral clients, in which case the left side of \eqref{equ:xb-times-xbb-large} is $0$ and \eqref{equ:xb-times-xbb-large} does not hold. In other words, Inequality~\eqref{equ:xb-times-xbb-large} prevents the case from happening. Indeed, we show that the left side of \eqref{equ:xb-times-xbb-large} is minimized when the following happens: $\calB$ serves $u\floor{y_\calB}$ integral clients, and $u$ fractional clients, each with fraction $\lfrac{y_\calB}$. In this case, \eqref{equ:xb-times-xbb-large} holds with equality. 

\ifdefined\CR
\smallskip
\begin{proofof}{\bf Proof of Lemma~\ref{lemma:bounding-moving-cost}}
\else
\begin{proof}[\bf Proof of Lemma~\ref{lemma:bounding-moving-cost}]
\fi
Focus on some $i \in \calS, i' \in \calF \setminus \calS, j \in \calC$. Suppose $i \in \calU_{v}$ for some $v \in {\calA}$ and $i' \in \calU_{v'}$ for some $v' \in \calC^* \setminus \calA$.  Then 
\ifdefined\CR
\begin{align*}
&\quad d(\calA, \calC^* \setminus \calA) \quad \leq \quad d(v, v') \quad \leq \quad d(v, i') + d(v', i') \\
&\leq \quad 2d(v, i') \quad \leq \quad 2(d(v, i) + d(i, j) + d(j, i')) \\ 
&\leq \quad 2[2d(i,j)+2\ell\dav(j) + d(i', j)]. 
\end{align*}
\else
\begin{align*}
d(\calA, \calC^* \setminus \calA) & \quad \leq \quad d(v, v') \quad \leq \quad d(v, i') + d(v', i') \quad \leq \quad 2d(v, i') \quad \leq \quad 2(d(v, i) + d(i, j) + d(j, i')) \\ 
& \quad \leq \quad 2[2d(i,j)+2\ell\dav(j) + d(i', j)]. 
\end{align*}
\fi
In the above sequence, the third inequality used the fact that $i' \in \calU_{v'}$ and the fifth inequality used Property~\ref{property:moving-to-centers} in Claim~\ref{claim:cluster-centers}. Thus, 
\ifdefined\CR
\begin{align*}
&\quad\ \lfrac{y'_\calS}\rfrac{y_\calS} d(\calA, \calC^* \setminus \calA) \\
&\leq \ \frac{1}{u}\sum_{j \in \calC}x_{\calS,j}(1-x_{\calS,j})d(\calA, \calC^* \setminus \calA)\\
&=\ \frac1u\sum_{j\in \calC, i \in \calS, i' \in \calF\setminus \calS} x_{i,j}x_{i',j} d(\calA, \calC^* \setminus \calA) \\
&\leq \  \frac{2}{u}\sum_{j,i,i'}x_{i,j}x_{i',j}\left[2d(i,j) + 2\ell\dav(j) + d(i', j)\right]\\
&=\  \frac{4}{u}\sum_{j, i}x_{i,j}(1-x_{\calS,j})\left[d(i,j) + \ell\dav(j)\right]  \\
&\qquad \qquad+ \frac{2}{u}\sum_{j, i'}x_{\calS,j}x_{i',j}d(i',j)\\
&\leq\  \frac{4}{u}\sum_{i, j}x_{i,j}\left[d(i,j) + \ell\dav(j)\right] + \frac{2}{u}\sum_{j}x_{\calS,j}\dav(j)\\
&=\ \frac{4}{u}D_\calS + \frac{4\ell + 2}{u}D'_\calS.
\end{align*}
\else
\begin{align*}
&\quad\ \lfrac{y'_\calS}\rfrac{y_\calS} d(\calA, \calC^* \setminus \calA)  \leq \ \frac{1}{u}\sum_{j \in \calC}x_{\calS,j}(1-x_{\calS,j})d(\calA, \calC^* \setminus \calA)\\
&=\ \frac1u\sum_{j\in \calC, i \in \calS, i' \in \calF\setminus \calS} x_{i,j}x_{i',j} d(\calA, \calC^* \setminus \calA)  \leq \  \frac{2}{u}\sum_{j,i,i'}x_{i,j}x_{i',j}\left[2d(i,j) + 2\ell\dav(j) + d(i', j)\right]\\
&=\  \frac{4}{u}\sum_{j, i}x_{i,j}(1-x_{\calS,j})\left[d(i,j) + \ell\dav(j)\right]  + \frac{2}{u}\sum_{j, i'}x_{\calS,j}x_{i',j}d(i',j)\\
&\leq\  \frac{4}{u}\sum_{i, j}x_{i,j}\left[d(i,j) + \ell\dav(j)\right] + \frac{2}{u}\sum_{j}x_{\calS,j}\dav(j) =\ \frac{4}{u}D_\calS + \frac{4\ell + 2}{u}D'_\calS.
\end{align*}
\fi
In above summations, $j$ is over all clients in $\calC$, $i$ is over all locations in $\calS$ and $i'$ is over all locations in $\calF \setminus \calS$.   The first inequality in the sequence used Lemma~\ref{lemma:xb-times-xbb-large}. All other inequalities and equations follow from the definitions of the notations used. 
\ifdefined\CR
\end{proofof}
\else
\end{proof}
\fi

\subsection{Constructing family of neighborhood trees}
Lemma~\ref{lemma:bounding-moving-cost} gives a necessary bound for our analysis. Still, we need to guarantee some other conditions when moving the demands. For example, when moving demands out of an ``isolated'' set $\calA$, we should make sure that the distance is roughly $d(\calA, \calC^* \setminus \calA)$. If $y'(\calU_\calA) \leq \floor{y(\calU_\calA)}$, then we should not move demands out of $\calA$, as $d(\calA, \calC^* \setminus \calA)$ may not be bounded any more. 

We guarantee these conditions by building a set of rooted trees over $\calC^*$, called \emph{neighborhood trees}. Roughly speaking, each neighborhood tree contains representatives that are nearby; moving demands within a neighborhood tree does not cost too much. 

We use a triple $T = ({\calV}, E, r)$ to denote a rooted tree, with vertex set ${\calV} \subseteq \calC^*$, edge set $E \subseteq {{\calV} \choose 2}$ and root $r \in {\calV}$.   Given a rooted tree $T = ({\calV}, E, r)$ and a vertex $v \in {\calV}$, we use $\Lambda_T(v)$ to denote the set of vertices in the sub-tree of $T$ rooted at $v$. If $v \neq r$, we use $\rho_T(v)$ to denote the parent of $v$ in $T$.

\begin{Definition}
A rooted tree $T = ({\calV} \subseteq \calC^*, E, r)$ is called a \emph{neighborhood tree} if for every vertex $v \in {\calV} \setminus r$, $d(v, \calC^* \setminus \Lambda_T(v)) = d(v, \rho_T(v))$. 
\end{Definition}

In other words, $T=({\calV}, E, r)$ is a neighborhood tree if for every non-root vertex $v$ of $T$, the parent $\rho_T(v)$ of $v$ is the nearest vertex in $\calC^*$ to $v$, except for $v$ itself and its descendants.  %It is easy to see that a neighborhood tree $({\calV}, E, r)$ is a minimum spanning tree for ${\calV}$ with weights defined by $d$; on the other hand, not every minimum spanning tree is a neighborhood tree.
The next lemma shows that we can cover $\calC^*$ using a set of neighborhood trees of size between $\ell$ and $\ell^2$.  The vertex sets of these trees almost form a partition of $\calC^*$, except that trees may share the same root. Since the lemma is technical and peripheral to the spirit of our result, we defer the proof to Section~\ref{subsec:omitted-proofs}.

\begin{lemma} \label{lemma:constructing-neighborhood-trees}
Given any positive integer $\ell$ such that $\cardinal{\calC^*} \geq \ell$, we can find a set $\bbT$ of neighborhood trees such that 
\begin{enumerate}[label=(T\arabic*),labelindent=*, leftmargin=*]
\item $\ell \leq \cardinal{{\calV}} \leq \ell^2$ for every neighborhood-tree $({\calV}, E, r) \in \bbT$; \label{property:trees-first} 
\label{property:trees-bounded-size}
\item $\union_{({\calV}, E, r) \in \bbT} {\calV} = \calC^*$;  \label{property:trees-covering}
\item For two distinct trees $T = ({\calV}, E, r), T' = ({\calV}', E', r') \in \bbT$, ${\calV}\setminus \set{r}$ and ${\calV}' \setminus \set{r'}$ are disjoint.  
\label{property:trees-last} 
\label{property:trees-partitioning}
\end{enumerate}
\end{lemma}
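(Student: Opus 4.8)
The plan is to reduce everything to decomposing a single large ``master'' neighborhood tree, so I would proceed in three stages.

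\textbf{Building the master tree.} First I would construct one neighborhood tree $T^*$ spanning all of $\calC^*$ by a Kruskal-style process: keep a forest on $\calC^*$, initially all singletons, each the root of its tree; repeatedly take the current root $v$ minimizing the distance from $v$ to $\calC^*$ minus the tree currently containing $v$, let $w$ realize this distance, make $w$ the parent of $v$, and merge $v$'s tree into $w$'s (whose root remains the root). Two facts drive the analysis: (i) the ``attachment distances'' $d(v,\rho_{T^*}(v))$ produced by the process are nondecreasing; and (ii) after $v$ is attached, every vertex ever added below $v$ is attached later, so by (i) none of them is closer to $v$ than $\rho_{T^*}(v)$ was at attachment time. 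Combining these, $\rho_{T^*}(v)$ stays a nearest vertex of $\calC^*\setminus\Lambda_{T^*}(v)$ to $v$, so $T^*$ is a neighborhood tree; the same argument shows that any tree arising during the process (a ``Kruskal-subtree''), with its own root, is a neighborhood tree, and more generally $\{v\}\cup\bigcup_{c\in C}\Lambda_{T^*}(c)$ rooted at $v$ is a neighborhood tree for any $v$ and any subset $C$ of the children of $v$ in $T^*$, since for such pieces the parent and the subtree of every non-root vertex coincide with those in $T^*$.

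\textbf{Decomposing.} The Kruskal-subtrees form a laminar family, and I would cover $\calC^*$ using a sub-collection of them together with pieces of the ``root plus complete child-subtrees'' form, keeping all sizes in $[\ell,\ell^2]$. A Kruskal-subtree that first reaches size $\ge\ell$ has size in $[\ell,2\ell-2]$, and these account for most of the pieces; the point of the generous upper bound $\ell^2$ (rather than $O(\ell)$) is to leave room to absorb the remaining vertices. The remaining ``thin'' vertices are the ones that, in a single Kruskal merge, join a tree that is already large. For such a vertex, bundle it with $x :=$ the vertex it attached to, the $T^*$-path leading up from $x$ to a vertex with enough spare child-subtrees, and a group of those spare child-subtrees, so that the piece lands in the size window and is still of the safe ``$x$ (or an ancestor) plus complete child-subtrees'' form; properties \ref{property:trees-covering} and \ref{property:trees-partitioning} allow a vertex to be the shared root of several of the pieces it anchors, which is what makes the bookkeeping close up.

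\textbf{Main obstacle.} The real work is the decomposition. Every piece of $\bbT$ must itself be a neighborhood tree with respect to all of $\calC^*$, and because a vertex of a neighborhood tree can be closer to one of its own descendants than to its parent, one cannot ``trim'' a proper part of a subtree $\Lambda_{T^*}(u)$ while keeping $u$ (or an ancestor of $u$) in a neighborhood tree — each piece is forced to be essentially a vertex together with some complete child-subtrees of it. Arranging a cover by pieces of this shape that simultaneously covers $\calC^*$, is disjoint except at shared roots, and has every piece of size in $[\ell,\ell^2]$ — in particular never stranding a piece of size below $\ell$, which is exactly what dictates how the thin vertices are attached and where the cuts are placed — is the technical heart of the argument; verifying that the master tree is a neighborhood tree (the monotonicity argument above) is by comparison routine.
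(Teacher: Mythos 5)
Your construction of $T^*$ and the verification that it is a neighborhood tree are fine; the monotonicity in (i) is actually not needed, since the subtree $\Lambda_{T^*}(v)$ always contains $v$'s forest-tree at the moment $v$ is attached while $\rho_{T^*}(v)$ lies outside $\Lambda_{T^*}(v)$, which already forces $d(v,\rho_{T^*}(v)) \le d(v,\calC^*\setminus\Lambda_{T^*}(v)) \le d(v,\rho_{T^*}(v))$ regardless of merge order. The gap is in the decomposition step, and it is structural rather than a matter of bookkeeping.

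You only allow pieces of two shapes: Kruskal-subtrees (rooted as the process rooted them) and $\{v\}\cup\bigcup_{c\in C}\Lambda_{T^*}(c)$ for $C$ a subset of $v$'s $T^*$-children. Take $\calC^*=\{s_1,\dots,s_n\}$ on a line at positions $2^1,\dots,2^n$, and break ties so that the lower-indexed root attaches. Every $s_j$ ($j\ge 2$) has $s_{j-1}$ as its unique nearest neighbor, the minimum attachment distance is always realized at the right end of the current prefix, and $T^*$ comes out as the path $s_1-s_2-\cdots-s_n$ rooted at $s_n$, so $\Lambda_{T^*}(s_j)=\{s_1,\dots,s_j\}$ for every $j<n$. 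Then every allowed piece with more than one vertex is a prefix $\{s_1,\dots,s_j\}$. Any two such prefixes of size $\ge\ell$ overlap in at least $\ell>1$ vertices, so by \ref{property:trees-partitioning} your $\bbT$ can contain at most one of them, and once $n>\ell^2$ the vertices $s_{\ell^2+1},\dots,s_n$ cannot be covered by any piece obeying \ref{property:trees-bounded-size}. The lemma is not false on this instance — for $2\le a\le b$ the segment $\{s_a,\dots,s_b\}$ rooted at $s_a$, i.e.\ with the orientation \emph{reversed} relative to $T^*$, is a neighborhood tree because $s_{a-1}$ is always $s_a$'s nearest neighbor — but that piece is neither of your two shapes, so your procedure never finds it. Your ``each piece is forced to be essentially $v$ plus complete child-subtrees'' is correct only for pieces that inherit $T^*$'s parent relation; valid pieces can reorient, and on a lopsided master tree they must.

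The paper's ``only merge trees of size $<\ell$'' rule is precisely what removes this obstruction. On the same instance it freezes the root at $s_\ell$; the cuts $\Lambda_T(v)$ become right segments $\{s_j,\dots,s_n\}$, and the residual tree after each cut is still ``a root with treelets of size $<\ell$ hung onto it'' and hence still a neighborhood tree. The bounded treelet size is also what makes the ``deepest super-node of weight $\ge\ell(\ell-1)$'' argument yield a vertex with a subtree of controlled size. Running the greedy merge to completion destroys both properties: roots drift arbitrarily far, and the units being hung can be arbitrarily large. That two-phase structure — cap merges at size $\ell$, then split large trees via the treelet contraction — is the real content of the lemma and is not recoverable from a single master tree.
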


\subsection{Moving demands within neighbourhood trees}
Recall that all the demands are at the client representatives. Every representative $v \in \calC^*$ has $uy'(\calU_v)$ units of demand. \emph {In this section, it is convenient for us to {\bf scale down} the demands by $u$.  Thus a representative $v \in \calC^*$ has $y'(\calU_v)$ units of demand.  Due to the scaling, moving $x$ units of demand from $v$ to $v'$ costs $uxd(v, v')$. If finally some $v$ has $\alpha_v$ units of demand, we need to open $\ceil{\alpha_v}$ facilities at $v$.} For analytical purposes, we also say that $v \in \calC^*$ has $y(\calU_v) \geq y'(\calU_v)$ units of supply.  The total supply is $\sum_{v \in \calC^*}y(\calU_v) = y_\calF \leq k$. 

Assume $\cardinal{\calC^*} \geq \ell$ for now. We apply Lemma~\ref{lemma:constructing-neighborhood-trees} to construct a set $\bbT$ of neighborhood trees satisfying Properties~\ref{property:trees-first} to \ref{property:trees-last}.  We assign the supplies and demands to vertices in the set $\bbT$. Notice that every representative in $\calC^*$ appears in $\bbT$, and it appears in $\bbT$ as a non-root at most once.  If $v \in \calC^*$ appears as a non-root, we assign the $y'(\calU_v)$ units of demand and the $y(\calU_v)$ units of supply to the non-root. Otherwise, we assign the $y'(\calU_v)$ units of demand and the $y(\calU_v)$ units of supply to an arbitrary root $v$ in $\bbT$.

Fix a neighborhood tree $T=(\calV, E, r) \in \bbT$ from now on.   Each $v \in \calV$ has $\alpha_v$ units of demand and $\beta_v$ units of supply. For $v \in \calV \setminus \set{r}$, we have $\alpha_v = y'(\calU_v)$ and $\beta_v = y(\calU_v)$. We have either $\alpha_r = y'(\calU_r), \beta_r = y(\calU_r)$ or $\alpha_r = \beta_r = 0$. Define $\alpha_{{\calV}'} := \sum_{v \in {\calV}'}\alpha_v$ and $\beta_{{\calV}'} :=\sum_{v \in {\calV}'}\beta_v$ for every ${\calV}' \subseteq {\calV}$.  We shall move demands and supplies within $T$.  Moving supplies is only for analytic purposes and costs nothing.  
When moving demands and supplies, we update $\set{\alpha_v: v \in \calV}$ and $\set{\beta_v: v \in \calV}$ accordingly.  Keep in mind that we always maintain the property that $\alpha_v \leq \beta_v$ for every $v \in {\calV}$; we do not change $\alpha_{\calV}$ and $\beta_{\calV}$ (we do not change the total demands or supplies in $\calV$).  After the moving process for $T$, we add $\ceil{\alpha_v}$ open facilities at $v$ for every $v \in \calV$.  We shall compare $\sum_{v \in \calV}\ceil{\alpha_v}$ to $\alpha_\calV$.

To define the moving process for $T = ({\calV}, E, r)$, we give each edge in $E$ a rank as follows. An edge $e = (v, v') \in E$ has length $L_e := d(v, v')$. Sort edges in $E$ according to their lengths; assume $e_1, e_2, \cdots, e_{|{\calV}| - 1}$ is the ordering. Let the rank of $e_1$ be 1. For each $t = 2, 3, \cdots, e_{|{\calV}|-1}$, if $L_{e_t} \leq 2\sum_{s = 1}^{t-1}L_{e_s}$, then let the rank of $e_t$ be the rank of $e_{t-1}$; otherwise let the rank of $e_t$ be the rank of $e_{t-1}$ plus 1. Let $h$ be the rank of $e_{|{\calV}|-1}$. For each $i \in [h]$, let $E_i$ be the set of rank-$i$ edges in $E$; for $i = 0, 1, \cdots, h$, let $E_{\leq i} = \union_{i' \leq i}E_{i'}$ be the set of edges of rank at most $i$.

\begin{claim}
\label{claim:lengths-of-same-rank}
For any $i \in [h]$ and $e, e' \in E_i$, we have $L_e/L_{e'} \leq 3^{|{\calV}|-1}$.
\end{claim}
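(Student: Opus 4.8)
The plan is to track the running sums $S_t := \sum_{s=1}^{t} L_{e_s}$ (with $S_0 := 0$) along the sorted ordering $e_1, e_2, \dots, e_{\cardinal{\calV}-1}$ and exploit the two defining features of the ranking. First observe that, since ranks are non-decreasing along this ordering and increase by at most one at each step, the rank-$i$ edges $E_i$ form a contiguous block $\set{e_a, e_{a+1}, \dots, e_b}$ of the ordering; moreover lengths are monotone within this block, so it suffices to bound $L_{e_b}/L_{e_a}$, and then for arbitrary $e, e' \in E_i$ one gets $L_e/L_{e'} \le L_{e_b}/L_{e_a}$. If $b = a$ the block is a single edge and the ratio is $1$, so assume $b > a$.

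Next I would establish two estimates. At the left end of the block: either $a = 1$, in which case $S_{a-1} = S_0 = 0$; or $a > 1$, in which case $e_a$ received a strictly larger rank than $e_{a-1}$, which by the definition of the ranking means $L_{e_a} > 2 S_{a-1}$. Either way $S_{a-1} \le \tfrac12 L_{e_a}$, hence $S_a = S_{a-1} + L_{e_a} \le \tfrac32 L_{e_a}$. Inside the block: for every $t$ with $a \le t < b$, the edges $e_t$ and $e_{t+1}$ have the same rank $i$, so by the definition of the ranking $L_{e_{t+1}} \le 2 S_t$, and therefore $S_{t+1} = S_t + L_{e_{t+1}} \le 3 S_t$. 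Iterating the latter from $t = a$ to $t = b-1$ gives $S_{b-1} \le 3^{\,b-1-a} S_a \le \tfrac32 \cdot 3^{\,b-1-a} L_{e_a}$, and then $L_{e_b} \le 2 S_{b-1} \le 3^{\,b-a} L_{e_a}$.

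To finish, note that the whole tree has only $\cardinal{\calV}-1$ edges, so the block satisfies $b - a \le \cardinal{\calV} - 2$; consequently $L_{e_b}/L_{e_a} \le 3^{\cardinal{\calV}-2} \le 3^{\cardinal{\calV}-1}$, which combined with the monotonicity reduction above proves the claim.

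The computation is short, and the only points that need care are bookkeeping ones: verifying that a rank class is an interval of the sorted list (so that $L_{e_a}$ and $L_{e_b}$ really are the extreme lengths of the class), handling the left-endpoint inequality uniformly for $a = 1$ and $a > 1$, and keeping the exponent honest so that it stays at most $\cardinal{\calV}-1$. (If one wishes to allow zero-length edges, one should interpret the degenerate ratio $0/0$ as $1$; the argument above is otherwise unaffected.) I do not anticipate any genuine obstacle beyond this.
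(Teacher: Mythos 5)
Your proof is correct and is essentially the same argument as the paper's: both reduce to bounding the longest rank-$i$ edge against the shortest one, note that the shortest rank-$i$ edge exceeds twice the running sum of all earlier edges (so $S_a \le \tfrac32 L_{e_a}$), iterate $S_{s+1} \le 3S_s$ across the rank-$i$ block, and finish with $b-a \le |\calV|-2$. Your write-up is marginally cleaner in explicitly naming the running sums $S_t$ and in handling the degenerate left endpoint $a=1$, and it yields the slightly sharper constant $3^{|\calV|-2}$, but the underlying idea is identical to the paper's.
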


\begin{proof}
It suffices to prove the lemma for the case where $e'$ is the shortest rank-$i$ edge, and $e$ is the longest rank-$i$ edge.  Suppose $e' = e_{t'}$ and $e = e_t$ for $t' < t$.  Let $L = \sum_{e'' \in E_{\leq i-1}}L_{e''}$. Then, $L_{e'} > 2L$. For every $s \in \set{t', t'+1, \cdots, t-1}$, we have $L_{e_{s+1}} \leq 2(L + L_{e_{t'}} + L_{e_{t'+1}} + \cdots +L_{e_s})$.  Thus, $L + L_{e_{t'}} + L_{e_{t'+1}} + \cdots +L_{e_s} + L_{e_{s+1}} \leq 3(L + L_{e_{t'}} + L_{e_{t'+1}} + \cdots +L_{e_s})$. Thus, $L_e \leq 3^{t - t'}(L + L_{e'}) < \frac{3}{2}\cdot 3^{t-t'}L_{e'} \leq 3^{|\calV|-1}L_{e'}$ as $t - t' \leq |\calV|-2$.
\end{proof}

For every $i \in \set{0, 1, \cdots, h}$, we call the set of vertices in a connected component of $(\calV, E_{\leq i})$ a level-$i$ set. The family of level-$i$ sets forms a partition of ${\calV}$; and the union of families over all $i \in \set{0, 1, \cdots, h}$ is a laminar family.  For every $i \in [h]$ and every level-$i$ set $\calA$, we check if Constraint~\eqref{LPC:compact} is satisfied for $\calB = \calU_\calA$ and every $\calJ \subseteq \calC$ (recall that this can be checked efficiently). If not, we find a violation of Constraint~\eqref{LPC:compact}; from now on, we assume Constraint~\eqref{LPC:compact} holds for all these rectangles $(\calB, \calJ)$.

\begin{claim}
\label{claim:identify-isl}
If a level-$i$ set ${\calA}$ does not contain the root $r$, then $d(\calA, \calC^* \setminus \calA) \geq \frac{L'}{2}$, where $L'$ is the length of the shortest edge in $E_{i+1}$. %In particular, this implies that $\calA$ is an island.
\end{claim}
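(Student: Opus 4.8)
The plan is to exploit the defining property of the rank function: an edge gets a strictly larger rank than the edges already processed precisely when it is more than twice as long as their total length. So let $\calA$ be a level-$i$ set not containing the root $r$, and let $e^\star = (a, b) \in E_{i+1}$ be the shortest edge of rank $i+1$, with $L' = L_{e^\star}$. First I would observe that, because $\calA$ does not contain the root and the trees in $\bbT$ are connected, there is at least one tree edge leaving $\calA$ (the edge from the highest vertex of $\calA$ to its parent), and every such edge has rank $\ge i+1$ — otherwise it would already have merged $\calA$ with more vertices at level $i$. Hence the cheapest edge leaving $\calA$ has length at least $L'$ by the choice of $e^\star$ as the shortest rank-$(i+1)$ edge (after noting all edges of rank $>i+1$ are at least as long as $L'$ as well, since ranks only increase the lengths).

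The next step is to convert ``cheapest tree edge leaving $\calA$ has length $\ge L'$'' into ``$d(\calA, \calC^*\setminus\calA) \ge L'/2$''. Here I would use the neighborhood-tree property. Take any $v \in \calA$ and $w \in \calC^*\setminus\calA$ realizing $d(\calA,\calC^*\setminus\calA)$. Walk up from $v$ in $T$ until the first vertex $v'$ whose parent edge leaves $\calA$; call that parent edge $e = (v', \rho_T(v'))$, so $L_e \ge L'$ by the previous paragraph. By the neighborhood-tree definition, $d(v', \rho_T(v')) = d(v', \calC^*\setminus\Lambda_T(v'))$, and since $w \notin \Lambda_T(v')$ (because $\Lambda_T(v') \subseteq \calA$), we get $d(v', w) \ge L_e \ge L'$. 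The triangle inequality then gives $L' \le d(v', w) \le d(v', v) + d(v, w)$. The point is that $d(v', v)$ is small: the path from $v$ up to $v'$ stays inside $\calA$, so it uses only edges of rank $\le i$, whose total length is by construction less than $\tfrac12 L_e \le \tfrac12 \cdot (\text{something})$ — more precisely, when $e$ was assigned rank $i+1 > i$ it was because $L_e > 2\sum_{s} L_{e_s}$ over all previously sorted (hence rank $\le i$, or rank $i+1$ but earlier) edges, so in particular the sum of all rank-$\le i$ edge lengths inside that component is less than $L_e/2$. Actually the cleanest bound: $d(v',v)$ is at most the total length of rank-$\le i$ edges in $\calA$'s component, which is $< L'/2$ if $L' = L_{e^\star}$ and $e^\star$ has rank $i+1$, because $L_{e^\star} > 2\sum_{\text{rank} \le i} L_{e_s}$.

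I expect the main obstacle to be the bookkeeping in this last inequality — namely pinning down exactly which edges contribute to the ``$\sum_{s=1}^{t-1} L_{e_s}$'' that $e^\star$ had to beat, and confirming that the sub-path from $v$ to $v'$ inside $\calA$ only uses edges counted in that sum. One has to be careful that $e^\star$ might not be the edge $e$ leaving $\calA$ via $v'$; but since all rank-$(i+1)$ edges have length $\ge L' = L_{e^\star}$ and $e^\star$'s rank-$(i+1)$ status forces $L_{e^\star} > 2\sum_{\text{earlier}} L_{e_s}$ where ``earlier'' includes every edge of rank $\le i$, and the $v\to v'$ path uses only such edges, the sum of their lengths is $< L_{e^\star}/2 = L'/2$, so $d(v,v') < L'/2$ and thus $d(v,w) \ge L' - L'/2 = L'/2$. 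Everything else is routine triangle-inequality chasing and the neighborhood-tree definition; no appeal to the LP is needed for this claim.
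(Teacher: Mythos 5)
Your opening steps are sound: walking up from $v$ to the highest vertex $v'$ of $\calA$, noting the edge $(v',\rho_T(v'))$ has rank $\geq i+1$ hence length $\geq L'$, and bounding $d(v,v') \leq L < L'/2$ via the sorted-rank definition are all correct. But there is a genuine gap in the assertion ``$\Lambda_T(v') \subseteq \calA$.'' This is false in general: $\Lambda_T(v')$ is the \emph{entire} subtree of $T$ rooted at $v'$, whereas $\calA$ is only the connected component of $v'$ in $(\calV,E_{\leq i})$. A descendant of $v'$ that is separated from $v'$ by some rank-$(\geq i+1)$ edge of $T$ lies in $\Lambda_T(v')\setminus\calA$, and such vertices exist whenever $v'$ has descendants hanging off a long edge. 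If the nearest representative $w\in\calC^*\setminus\calA$ is one of these ``hidden'' descendants, then the neighborhood-tree equality $d(v',\rho_T(v')) = d(v',\calC^*\setminus\Lambda_T(v'))$ gives you nothing, because $w$ sits \emph{inside} $\Lambda_T(v')$, the very set the equality excludes. Your proof thus only handles $w\notin\Lambda_T(v')$ and silently drops the other case.

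This is exactly the case the paper treats with a separate argument. One writes $\calC^*\setminus\calA = (\calC^*\setminus\Lambda_T(v'))\cup(\Lambda_T(v')\setminus\calA)$; your argument covers the first piece. For the second, take any connected component $\calA''$ of $(\Lambda_T(v')\setminus\calA, E_{\leq i})$ and let $v''$ be its highest vertex. The edge $(v'',\rho_T(v''))$ again has rank $\geq i+1$, and the neighborhood-tree property applied \emph{at $v''$} (not at $v'$) gives $d(v'',\rho_T(v''))=d(v'',\calC^*\setminus\Lambda_T(v''))\geq L'$; since $v'\notin\Lambda_T(v'')$, this yields $d(v'',v')\geq L'$. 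Then for any $a\in\calA$, $a''\in\calA''$, the paths $a\leadsto v'$ inside $\calA$ and $a''\leadsto v''$ inside $\calA''$ use disjoint edges of $E_{\leq i}$, total length $\leq L < L'/2$, so $d(a,a'')\geq d(v',v'')-L \geq L'-L > L'/2$. Adding this case closes the gap; everything else in your proposal matches the paper's reasoning.
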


\begin{proof}
See Figure~\ref{figure:tree} for the notations used in the proof. Let $v$ be the highest vertex in $\calA$ according to $T$, and $L = \sum_{e \in E_{\leq i}}L_e$ be the total length of edges of rank at most $i$. 

Notice that $\calC^* \setminus \calA = (\calC^* \setminus \Lambda_T(v)) \cup (\Lambda_T(v) \setminus \calA)$. $d(v, \calC^* \setminus \Lambda_T(v)) = d(v, \rho_T(v)) \geq L'$ since $T$ is a neighborhood tree and the rank of $(v, \rho_T(v))$ is at least $i + 1$.  Thus $d(\calA, \calC^* \setminus \Lambda_T(v)) \geq L' - L \geq \frac{L'}{2}$ as the distance from $v$ to any vertex in $\calA$ is at most $L$. We now bound $d(\calA, \Lambda_T(v) \setminus \calA)$.  Consider each connected component in $(\Lambda_T(v) \setminus \calA, E_{\leq i})$. Let $\calA'$ be the set of vertices in the component and $v'$ be its root. Since $d(v', \rho_T(v')) \geq L'$, we have $d(v', v) \geq L'$ as $\rho_T(v')$ is the nearest representative to $v'$ in $\calC^* \setminus \Lambda_T(v') \ni v$. Since each of $\calA$ and $\calA'$ is connected by edges in $E_{\leq i}$, $v \in \calA, v' \in \calA'$ and the total length of edges in $E_{\leq i}$ is $L$, we have that $d(\calA, \calA') \geq L' - L \geq \frac{L'}{2}$.  As this is true for any such $\calA'$, we have $d(\calA, \Lambda_T(v) \setminus \calA) \geq \frac{L'}{2}$, which, combined with $d(\calA, \calC^* \setminus \Lambda_T(v)) \geq \frac{L'}{2}$, implies the lemma.
\end{proof}

\begin{figure}
\centering
\includegraphics[width=0.5\textwidth]{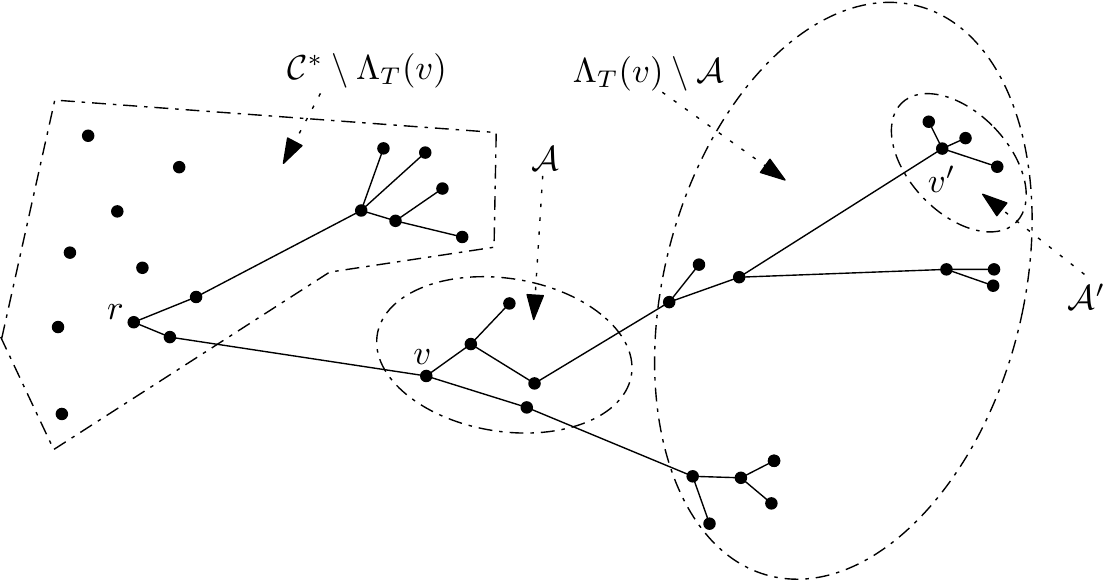}
\caption{Notations used in the proof of Claim~\ref{claim:identify-isl}. Solid circles are client representatives in $\calC^*$ and solid lines give a neighborhood tree $T$.}
\label{figure:tree}
\end{figure}

Recall that the family of all level-$i$ sets, over all $i=0, 1, 2,\cdots, h$ form a laminar family. Level-$0$ sets are singletons and the level-$h$ set is the whole set ${\calV}$.     Our moving operation is level-by-level: for every $i = 1, 2, \cdots, h$ in this order, for every level-$i$ set $\calA \subseteq \calV$, we define a moving process for $\calA$, in which we move demands and supplies within vertices in $\calA$.    After the moving operation for $\calA$, we guarantee the following properties.   

If $r \not\in \calA$, then either
\begin{enumerate}[label=(N\arabic*),labelindent=*, leftmargin=*]
\item all but one vertices $v \in {\calA}$ have $\alpha_v = \beta_v \in \Z_*$;  or
\label{property:all-but-one-integral}
\item every vertex $v \in {\calA}$ has $\beta_v \geq \ceil{\alpha_v} - 1/\ell$.
\label{property:all-good}
\end{enumerate}
If $r \in \calA$, then 
\begin{enumerate}[label=(I\arabic*),labelindent=*, leftmargin=*]
\item every vertex $v \in {\calA} \setminus \set{r}$ has $\beta_v \geq \ceil{\alpha_v}-1/\ell$. \label{property:set-containing-root}
\end{enumerate}

The above properties hold for all level-$0$ sets: they are all singletons;  Property~\ref{property:all-but-one-integral} holds if $r \notin \calA$ and Property~\ref{property:set-containing-root} holds if $r \in \calA$. Now,  suppose the properties hold for all level-$(i-1)$ sets. We define a moving operation for a level-$i$ set ${\calA}$ after which ${\calA}$ satisfies the properties. 

The first step is a collection step, in which we collect demands and supplies from ${\calA}$. For every $v \in \calA \setminus \set{r}$ such that $\beta_v < \ceil{\alpha_v} - 1/\ell$, we collect $\lfrac{\alpha_v}$ units of demand and $\beta_v - \floor{\alpha_v}$ units of supply from $v$ and keep them in a temporary holder. For all vertices $v \in \calA$ with $\beta_v > \ceil{\alpha_v}$, we collect $\beta_v - \ceil{\alpha_v}$ units of supply from $v$. Now, we have $\ceil{\alpha_v}-1/\ell \leq \beta_v \leq \ceil{\alpha_v}$ for every $v \in \calA \setminus \set{r}$. 

The second step is a redistribution step,  in which we move the demand and supply in the temporary holder back to ${\calA}$. If $r \in \calA$, we simply move the demand and the supply in the holder to $r$ and terminate the process. $\calA$ will satisfy Property~\ref{property:set-containing-root}. From now on we assume $r \notin \calA$. We try to move the demand and the supply in the holder to each $v \in \calA$ continuously until we have $\alpha_v = \beta_v \in \Z_*$: we first move demand from the holder to $v$ until $\alpha_v = \beta_v$, then move demand and supply at the same rate until $\alpha_v = \beta_v \in \Z_*$.  If we succeeded in making all vertices $v \in {\calA}$ satisfy $\alpha_v = \beta_v \in \Z_*$, then we can move the remaining supplies and demands in the holder to an arbitrary vertex in ${\calA}$. In this case ${\calA}$ satisfies Property~\ref{property:all-but-one-integral}. Suppose we failed to make $\alpha_v = \beta_v \in \Z_*$ for some $v \in {\calA}$.  The failure is due to the insufficient demand in the holder: we have collected at least the same amount of supply as demand; in the redistribution step, we either move the demand from the holder or move the demand and the supply at the same rate. We then move all the remaining supply in the holder to an arbitrary vertex $v \in {\calA}$. Notice that during the continuous redistribution process for $v$, we always maintain the property that $\ceil{\alpha_v}-1/\ell \leq \beta_v \leq \ceil{\alpha_v}$.  Moving the remaining supply to an arbitrary vertex $v$ also maintain the property that $\ceil{\alpha_v} - 1/\ell \leq \beta_v$. Thus ${\calA}$ will satisfy Property~\ref{property:all-good} in the end.

After we finished the moving operation for the level-$h$ set $\calV$,  our set ${\calV}$ satisfies Property~\ref{property:set-containing-root} as $r \in \calV$. Thus $\sum_{v \in {\calV}}\ceil{\alpha_v} \leq \sum_{v \in {\calV} \setminus \set{r}} (\beta_v + 1/\ell) + \alpha_r + 1 \leq \beta_\calV  + 1 + (\cardinal{\calV}-1)/\ell \geq \frac{2\ell-1}{(\ell-1)^2}\beta_{\calV}$ as $\beta_\calV \geq (\cardinal{\calV} - 1)(1-1/\ell)$ and $\cardinal{\calV} \geq \ell$. Taking this sum over all trees in $\bbT$, we have that the number of open facilities is at most $\frac{2\ell-1}{(\ell-1)^2}k$. By setting $\ell = \ceil{3/\eps}$, the number of open facilities  is at most $(1+\eps)k$.

It suffices to bound the moving cost for $T$.
\begin{lemma}
\label{lemma:moving-cost-for-T}
The moving cost of the operation for $T = ({\calV}, E, r) \in \bbT$ is at most $$\exp\left(O(\ell^2)\right)\left(D(\calU_{\calV\setminus \set{r}}) + D'(\calU_{\calV \setminus \set{r}})\right).$$
\end{lemma}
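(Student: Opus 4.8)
The plan is to bound the moving cost tree-by-tree, accounting separately for the two substeps (collection into the holder, and redistribution out of the holder), and then level-by-level within a fixed tree $T=(\calV,E,r)$. The guiding intuition is that every unit of demand or supply that we move during the processing of a level-$i$ set $\calA$ travels a distance of at most the diameter of $\calA$ in the graph $(\calV,E_{\leq i})$, which by construction (Claim~\ref{claim:lengths-of-same-rank}) and the rank-merging rule is at most a constant times $3^{|\calV|-1}$ times the shortest edge in $E_{i+1}$, hence by Claim~\ref{claim:identify-isl} at most $O(3^{|\calV|}) \cdot d(\calA,\calC^*\setminus\calA)$ whenever $r \notin \calA$. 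So the key chain is: (moving distance for $\calA$) $\lesssim 3^{|\calV|} d(\calA,\calC^*\setminus\calA)$, then apply Lemma~\ref{lemma:bounding-moving-cost} with this $\calA$ and $\calS = \calU_\calA$ to convert the $d(\calA,\calC^*\setminus\calA)$ term (times the amount moved) into $O\big(\tfrac{1}{u}(D_{\calS} + \ell D'_{\calS})\big)$.

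First I would verify the hypotheses of Lemma~\ref{lemma:bounding-moving-cost} are in force at the moment we process a level-$i$ set $\calA$ with $r \notin \calA$: we need $y'_{\calS} \geq \floor{y_{\calS}}$ for $\calS = \calU_\calA$, and we need Constraint~\eqref{LPC:compact} to hold for $\calB = \calS$ — the latter is exactly what we check (and either it holds or we output a violated rectangle, terminating). The condition $y'_{\calS}\geq\floor{y_{\calS}}$: note $y'(\calU_\calA) = \alpha_\calA$ and $y(\calU_\calA) = \beta_\calA$ are invariants of the processing (we never change $\alpha_\calV,\beta_\calV$ restricted to subsets — total demand/supply within $\calA$ is conserved across the level-$i$ processing since all moves stay inside $\calA$), and $\alpha_v \leq \beta_v$ is maintained throughout, so $\alpha_\calA \leq \beta_\calA$, and since $\alpha_\calA$ need not exceed $\floor{\beta_\calA}$... actually here I would use that the collection step only gets triggered to move demand around $\calA$, and in the case it matters (case analysis on which property $\calA$ ends up satisfying) we are precisely in the regime $y'_\calS < \ceil{y_\calS}$; one must be slightly careful and split on whether $\alpha_\calA \geq \floor{\beta_\calA}$ — if not, we can afford $\ceil{\alpha_v}$ opens trivially within $\calA$ without any inter-$\calA$ movement issue, and if so the lemma applies.

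Next I would quantify the total amount of demand/supply moved during the processing of a single level-$i$ set $\calA$. In the collection step we remove at most $\beta_\calA - \sum_{v\in\calA}\floor{\alpha_v} \leq |\calA|$ units total (demand plus supply), and the redistribution moves the same amount back; so at most $O(|\calA|) \leq O(|\calV|)$ units move, each over distance $O(3^{|\calV|} d(\calA,\calC^*\setminus\calA))$, giving moving cost $O\big(|\calV| \, 3^{|\calV|}\, u \, \lfrac{\cdot}{}\rfrac{\cdot}{}^{-1}\big)$... but I must be careful: Lemma~\ref{lemma:bounding-moving-cost} gives a bound on $\lfrac{y'_\calS}\rfrac{y_\calS}\, d(\calA,\calC^*\setminus\calA)$, so I want the amount of demand moved out of $\calA$ to be comparable to $u\lfrac{y'_\calS}$ (the scaled-down demand excess), and the redistribution cost for the holder's mass is $u \times (\text{mass}) \times (\text{distance})$. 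The mass in the holder is at most the total supply-excess collected, which I would bound by $O(|\calV|/\ell)$ coming from the $\beta_v - \floor{\alpha_v}$ terms plus the fractional parts — and here the crucial point is that when $r\notin\calA$ the final configuration has nearly-integral $\beta_v$, so the "net" demand that genuinely has to cross out of $\calA$ to a sibling is exactly the non-integral slack, which is $\lfrac{\alpha_\calA} = \lfrac{y'_\calS}$, times $u$ after rescaling. Summing the per-$\calA$ costs over all level-$i$ sets and all $i\in[h]$, and using that the $\calU_\calA$'s at a fixed level are disjoint (so $\sum_\calA D_{\calU_\calA} \leq D(\calU_{\calV})$) while across levels the laminar structure gives only an $O(h) = O(|\calV|) = O(\ell^2)$ multiplicative overhead, I get total cost $\exp(O(\ell^2))(D(\calU_{\calV\setminus\{r\}}) + D'(\calU_{\calV\setminus\{r\}}))$ — the $\ell$ and $3^{|\calV|} \leq 3^{\ell^2}$ factors all fold into $\exp(O(\ell^2))$, and the root $r$ is excluded because $r$'s own demand is never forced to move (moves into $\calA\ni r$ terminate at $r$), so $D_r, D'_r$ do not appear.

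The main obstacle I anticipate is the bookkeeping that makes the "amount of demand crossing out of $\calA$" genuinely equal to (a constant times) $u\lfrac{y'_{\calU_\calA}}\rfrac{y_{\calU_\calA}}^{?}$ rather than something uncontrolled: naively the collection step could shuffle an amount proportional to $|\calA|$, and if $\rfrac{y_{\calU_\calA}}$ is tiny then Lemma~\ref{lemma:bounding-moving-cost}'s left-hand side $\lfrac{y'_\calS}\rfrac{y_\calS}d(\calA,\cdot)$ is tiny and gives a useless bound on $d(\calA,\cdot)$ alone. The resolution, which I would spell out carefully, is the dichotomy built into Properties~\ref{property:all-but-one-integral}/\ref{property:all-good}: when $\rfrac{y_\calS}$ is small we are (or can force ourselves) into the "all but one integral" branch where essentially no demand needs to leave $\calA$ at the next level up, so we never actually invoke the distance bound in that regime; and when $\rfrac{y_\calS}$ is bounded below, Lemma~\ref{lemma:bounding-moving-cost} is strong enough. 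Getting this case split to interface cleanly with the inductive level-by-level argument, and checking that the constant factors compound only to $\exp(O(\ell^2))$ and not worse, is where the real work lies.
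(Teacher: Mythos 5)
Your overall scaffolding—proceed level-by-level, bound the diameter of the active set by the rank structure (Claims~\ref{claim:lengths-of-same-rank}, \ref{claim:identify-isl}), convert distance to cost via Lemma~\ref{lemma:bounding-moving-cost}, and absorb the $h\leq|\calV|\leq\ell^2$ and $3^{|\calV|}$ factors into $\exp(O(\ell^2))$—matches the paper's. But you have the crucial detail at the wrong level of the laminar family, and the very obstacle you flag at the end is real and is not resolved by the case split you sketch.

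Specifically, you propose to apply Lemma~\ref{lemma:bounding-moving-cost} with $\calA$ itself (the level-$i$ set currently being processed) and $\calS=\calU_\calA$. This runs into exactly the two problems you name: (i) the total demand collected during the processing of $\calA$ is $\sum_{v}\lfrac{\alpha_v}$, which can be $\Theta(|\calA|)$ and bears no controlled relation to $\lfrac{y'_{\calU_\calA}}$; and (ii) $\rfrac{y_{\calU_\calA}}$ can be arbitrarily small and $y'_{\calU_\calA}\geq\floor{y_{\calU_\calA}}$ can fail, so the lemma's hypothesis and its left-hand side both become useless. Your proposed remedy (``when $\rfrac{y_\calS}$ is small we are in the all-but-one-integral branch so no demand needs to leave $\calA$ at the next level up'') speaks to the parent level, not to the cost incurred \emph{inside} $\calA$ at the current level; also note the process for $\calA$ never moves demand out of $\calA$ to a sibling, so ``demand crossing out of $\calA$'' is not the quantity that needs bounding.

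The paper's resolution is to apply Lemma~\ref{lemma:bounding-moving-cost} not to $\calA$ but to each level-$(i-1)$ set $\calA'\subset\calA\setminus\{r\}$ from which demand is actually collected, and this is exactly where the inductive dichotomy does the work. If demand is collected from $v\in\calA'$, then $\beta_v<\ceil{\alpha_v}-1/\ell$ before collection, which rules out Property~\ref{property:all-good} for $\calA'$, so $\calA'$ must satisfy Property~\ref{property:all-but-one-integral} with $v$ as the unique non-integral vertex. Consequently: all other $v'\in\calA'$ have $\alpha_{v'}=\beta_{v'}\in\Z_*$, so the demand collected from $\calA'$ is exactly $\lfrac{\alpha_v}=\lfrac{y'_{\calU_{\calA'}}}$; $\rfrac{y_{\calU_{\calA'}}}=\rfrac{\beta_v}>1/\ell$ is bounded below; $y'_{\calU_{\calA'}}>\floor{y_{\calU_{\calA'}}}$ holds; and (since all prior moves were confined to strict subsets of $\calA'$ or disjoint sets) $\alpha_{\calA'},\beta_{\calA'}$ are still the original $y',y$ values of $\calU_{\calA'}$. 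With $\calS=\calU_{\calA'}$ the lemma then yields a clean per-$\calA'$ cost bound of $\exp(O(\ell^2))(D_\calS+D'_\calS)$, and summing over the disjoint $\calU_{\calA'}$, over level-$i$ sets, and over $i$ gives the claim. Without this child-level application, I don't see how to close the gap in your write-up.
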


\begin{proof}
Consider the moving process for a level-$i$ set ${\calA}$.  Suppose we collected some demand from $v \in \calA$. It must be the case that $v \neq r$ and $\beta_v < \ceil{\alpha_v} - 1/\ell$ before the collection, as otherwise we would not collect demand from $v$. If we let $\calA' \subseteq \calA$ be the level-$(i-1)$ set containing $v$, then $\calA'$ must satisfy $r \notin \calA'$ and Property~\ref{property:all-good} by the induction assumption. This implies that we did not collect demands from any other vertices  in $\calA'$.  Notice that  $\beta_v < \ceil{\alpha_v}-1/\ell$ implies $\alpha_v > \floor{\beta_v}$ and $\rfrac{\beta_v} > 1/\ell$.  Then, $\alpha_{\calA'} > \floor{\beta_{\calA'}}$ and $\rfrac{\beta_{\calA'}} > 1/\ell$ as all vertices $v' \in {\calA'} \setminus \set{v}$ have $\alpha_{v'} = \beta_{v'} \in \Z_*$. Since we never moved demands or supplies in or out of ${\calA'}$ before, we have $\alpha_{\calA'} = y'_\calS$ and $\beta_{\calA'} = y_\calS$, where $\calS = \calU_{\calA'}$. Then $y'_\calS >\floor{y_\calS}$ and $\rfrac{y_\calS} > 1/\ell$. 

  As we assumed that Constraint~\eqref{LPC:compact} is satisfied for $\calB = \calS$ and every $\calJ \subseteq \calC$, we can apply Lemma~\ref{lemma:bounding-moving-cost} to show that $\lfrac{y'_\calS}\rfrac{y_\calS} d({\calA'}, \calC^* \setminus {\calA'}) \leq \frac{4}{u}D_\calS + \frac{4\ell+2}{u} D'_\calS$. The demands collected from $v$ will be moved to vertices in ${\calA}$.  The moving distance is at most $\sum_{e \in E_{\leq i}}L_e \leq \cardinal{{\calV}} 3^{\cardinal{\calV}}L'$ by Claim~\ref{claim:lengths-of-same-rank}, where $L'$ is the length of the shortest edge in $E_i$. Now, by Claim~\ref{claim:identify-isl}, $L' \leq 2d(\calA', \calC^* \setminus \calA')$. Thus, the moving distance is at most 
\vspace*{-0.6\abovedisplayskip}
\[2\cardinal\calV3^{\cardinal\calV}d(\calA', \calC^* \setminus \calA') \leq \frac{2}{4u}\frac{\cardinal{{\calV}}3^{\cardinal{{\calV}}}}{\lfrac{y'_\calS}\rfrac{y_\calS}} (2D_\calS + (2\ell+1) D'_\calS).
\]
\vspace*{-0.6\belowdisplayskip}

\noindent Notice that $\rfrac{y_\calS} > 1/\ell$ and $\cardinal{\calV} \leq \ell^2$. The distance is at most $\frac{\exp\left(O(\ell^2)\right)}{u\lfrac{y'_\calS}} (D_\calS + D'_\calS)$. As we moved $\lfrac{y'_\calS}$ units of demand from $\calA'$, the moving cost is at most $\exp\left(O(\ell^2)\right)(D_\calS + D'_\calS)$.  

Taking the sum of the upper bounds over all level-$(i-1)$ sets $\calA' \subseteq \calA \setminus \set{r}$,  the cost is at most $\exp\left(O(\ell^2)\right)\left(D(\calU_{\calA\setminus \set{r}}) + D'(\calU_{\calA \setminus \set{r}})\right)$.  Taking the sum over $i \in [h]$ and all level-$i$ sets $\calA$, the cost is at most $\exp\left(O(\ell^2)\right)\left(D(\calU_{\calV\setminus \set{r}}) + D'(\calU_{\calV \setminus \set{r}})\right)$, as the number $h$ of levels is absorbed by $\exp\left(O(\ell^2)\right)$. This finishes the proof of Lemma~\ref{lemma:moving-cost-for-T}.
\end{proof}

Finally, taking the bound over all neighborhood trees $T = ({\calV}, E, r)$, the moving cost is at most $\exp(O(\ell^2))(D_\calF + D'_\calF)$ due to Property~\ref{property:trees-partitioning} and the fact that $\set{\calU_v : v \in \calC^*}$ forms a partition of $\calF$.  Since $D_\calF = D'_\calF = \LP$, the moving cost is at most $\exp(O(\ell^2))\LP$. 

This finishes the proof of Theorem~\ref{theorem:main} for the case $\cardinal{\calC^*} \geq \ell$. When $\cardinal{\calC^*} < \ell$, we only build one neighborhood tree $(\calC^*, E, r)$. Any minimum spanning tree over $\calC^*$ will be a neighborhood tree.  We run the algorithm for this neighborhood tree. The argument for moving cost still works; it suffices to bound the number of open facilities. After the moving process, we have $\beta_v \geq \ceil{\alpha_v} - 1/\ell$ for every $v \in \calC^*\setminus \set{r}$. Also $\beta_{\calC^*}\leq k$. Thus, $\beta_r \leq k - \beta_{\calC^* \setminus \set{r}} \leq k-\sum_{v \in \calC^*\setminus \set{r}}(\ceil{\alpha_v}-1/\ell) \leq k-\sum_{v \in \calC^*\setminus \set{r}}\ceil{\alpha_v} + (\ell-2)/\ell$ as $\cardinal{\calC^*} < \ell$. Thus, $\ceil{\alpha_r} \leq \ceil{\beta_r} \leq k - \sum_{v \in \calC^* \setminus \set{r}}\ceil{\alpha_v} + 1$, implying $\sum_{v \in \calC^*}\ceil{\alpha_v} \leq k+1$. Thus, the number of open facilities is at most $k+1 \leq \ceil{(1+\eps)k}$.  To finish the proof of Theorem~\ref{theorem:main}, it remains to prove the two technical lemmas.

\subsection{Proofs of Technical Lemmas}
\label{subsec:omitted-proofs}
\ifdefined\CR
\ \\
\begin{proofof}{\bf Proof of Lemma~\ref{lemma:xb-times-xbb-large}}
\else
\begin{proof}[\bf Proof of Lemma~\ref{lemma:xb-times-xbb-large}]
\fi
For simplicity we let $y = y_\calB, y'=y'_\calB$ and $x_j = x_{\calB, j}$ for every $j \in \calC$.   Throughout the proof,  $y$ and $y'$ are fixed.

We assume $\calC = [n]$ and $1 \geq x_1 \geq x_2 \geq \cdots \geq x_n \geq 0$.    Let  $\bar{f}(p) = \min\set{f(p, y), uy'}$ for every integer $p \in [0, n]$. Notice that $\bar f$ is a non-decreasing concave function as $f(\cdot, y)$ is concave and $uy'$ is independent of $p$.   The conjunction of Constraint~\eqref{LPC:compact} and $y' = \sum_{j = 1}^n x_j/u$ is equivalent to $\sum_{j = 1}^{p} x_j \leq \bar f(p)$ for every $p \in [n]$. 

Let $g:[0,1] \to \R$ be any second-order differentiable concave function such that $g(0) = 0$.   We shall show that $\sum_{j = 1}^{n} g(x_j) \geq \sum_{j=1}^{n} g(x^*_j)$, where $x^*_j = \bar f(j) - \bar f(j-1)$ for every $j \in [n]$.  

%Then, we need to lower bound $\sum_{j \in \calC}g(x_j)$. We only use the exact definition of $g(x)$ at the end of the proof; before that, we only assume that $g$ is a second-order differentiable concave function on $[0, 1]$ and $g(0) = 0$. 

We use $g'$ and $g''$ to denote the first-order and second-order derivative functions of $g$ respectively.  For any $x \in [0, 1]$, let $\psi(x) = \cardinal{\set{j \in \calC:x_j \geq x}}$. Then 
\ifdefined\CR
\begin{align*}
\sum_{j \in \calC}g(x_j) &= \int_0^1\psi(x)g'(x)dx \\
&=\int_0^1\left(g'(0)+\int_0^xg''(t)dt\right)\psi(x)dx\\
&=g'(0)\int_0^1\psi(x)dx + \int_0^1\left[\int_t^1\psi(x)dx\right]g''(t)dt.
\end{align*}
\else
\begin{align*}
\sum_{j \in \calC}g(x_j) &= \int_0^1\psi(x)g'(x)dx = \int_0^1\left(g'(0)+\int_0^xg''(t)dt\right)\psi(x)dx\\
&=g'(0)\int_0^1\psi(x)dx + \int_0^1\left[\int_t^1\psi(x)dx\right]g''(t)dt.
\end{align*}
\fi

Notice that the first term is equal to $g'(0)\sum_{j \in \calC}x_j = g'(0)uy'$, which is independent of $\vec x := (x_1, x_2, \cdots, x_n)$.  Since $g$ is concave, we have $g''(t) \leq 0$ for every $t \in [0, 1]$.  We show that $Q(t):=\int_t^1\psi(x)dx$ is maximized when $\vec x = \vec x^*:=(x^*_1, x^*_2, \cdots, x^*_n)$, for every $t \in [0, 1]$. 

We now fix $t \in [0, 1]$. Notice that $Q(t) = \sum_{j = 1}^{p_t} (x_j - t)$ where $p_t$ is the largest integer $p$ such that $x_p \geq t$.  Then $Q(t) \leq \bar f(p_t) - tp_t \leq \max_{p=0}^{n} \left(\bar f(p) - tp\right)$.  

% $\sum_{j \in \calC}\max\set{0,x_j - t}$ for every $t \in [0, 1]$.  Let $\calC' = \set{j \in \calC:x_j \geq t}$. Then $Q(t) = \sum_{j \in \calC'}x_j - t\cardinal{\calC'} \leq \bar{f}(\cardinal{\calC'}, y) - t\cardinal{\calC'}$, where $\bar{f}(p, y) = \min\set{f(p, y), uy'}$. So, $Q(t) \leq \max_{p = 0}^{n}\left(\bar f(p, y) - tp \right)$.

%Fix $y$ and $y'$, we now give a vector $\vec x = (x_j:j \in \calC)$ such that $Q(t) = \max_{p = 0}^{n}\left(\bar f(p, y) - tp \right)$ for every $t \in [0, 1]$. That is $Q(t)$ achieves this upper bound at every $t \in [0, 1]$. Then, $\sum_{j \in \calC}g(x_j)$ is minimized at this $\vec x$. 

%For simplicity we identify $\calC$ with $[n]$. Then $\vec x$ is very simple : $x_j = \bar f(j, y) - \bar f(j-1,y)$ for every $j \in [n]$. Then $\bar f(p, y) = \sum_{j = 1}^p x_j$ for every integer $p \in [0, n]$.  Notice that $\bar f(\cdot, y)$ is a concave function since $\bar f(\cdot, y)$ is $f(\cdot, y)$ with a upper threshold and $f(\cdot, y)$ is concave. 

We show that $Q(t) = \max_{p=0}^{n} \left(\bar f(p) - tp\right) $ when $\vec x = \vec x^*$. Consider the sequence $x_1 - t, x_2 - t, \cdots, x_n - t$. The sequence is non-increasing; $\bar f(p) - tp$ is the sum of the first $p$ number in the sequence by the definition of $\set{x^*_j}$. Thus,  the sum is maximized when $\bar f(p) - tp$ is the largest number such that $x_p \geq t$. This $p$ is exactly the definition of $p_t$.  Thus, $Q(t)$ is maximized when $\vec x = \vec x^*$.  This proves that $\sum_{i=1}^n g(x_i) \geq \sum_{i=1}^n g(x^*_i)$.

%Suppose $x_p \geq t > x_{p+1}$ for some integer $p \in [0, n]$ (assuming $x_0 = \infty$ and $x_{n+1} = -\infty$). Then by the definition of $Q(t)$ we have $Q(t) = \sum_{j = 1}^p (x_j - t) = \bar f(p, y) - pt$. This is at least $\bar f(p', y) - p't = \sum_{j=1}^{p'} (x_j - t)$ for any integer $p' \in [0, n]$, as $x_j - t \geq 0$ for $j \leq p$ and $x_j - t < 0$ for $j > p$.  Thus, $Q(t) = \max_{p=0}^{n}\left(\bar f(p,y ) - tp\right)$.

%Thus, $\sum_{j \in \calC}g(x_j)$ is minimized at this particular $\vec x$. Since $\floor{y} \leq y' \leq y$, $\vec{x}$ is as follows. If $j \leq u\floor{y}$, then $x_j = 1$. If $u\floor{y} < j \leq u\floor{y} + \floor{\frac{u\lfrac{y'}}{\lfrac{y}}}$, then $x_j = \lfrac{y}$. If $u = u\floor{y} + \floor{\frac{u\lfrac{y'}}{\lfrac{y}}} + 1$, then $x_j = \lfrac{\frac{u\lfrac{y'}}{\lfrac{y}}}\lfrac{y}$. For all other clients $j \in \calC$, we have $x_j = 0$.  In other words, we distribute the total of $uy'$ units to the clients in $[n]$ as follows. Give $1$ unit to the first $u\floor{y}$ clients, then give $\lfrac{y}$ to the following clients until we have given $uy'$ units in total.  The last client may get between $0$ and $\lfrac{y}$ units. 

Now we let $g(x) \equiv x(1 - x)$. Then $g(0) = g(1) = 0$. Thus, $\sum_{j \in \calC}g(x^*_j) = \floor{\frac{u\lfrac{y'}}{\lfrac{y}}}g\left(\lfrac{y}\right) + g\left(\lfrac{\frac{u\lfrac{y'}}{\lfrac{y}}}\lfrac{y}\right)$. By the concavity of $g$ and $g(0) = 0$, we have $g\left(\lfrac{\frac{u\lfrac{y'}}{\lfrac{y}}}\lfrac{y}\right) \geq \lfrac{\frac{u\lfrac{y'}}{\lfrac{y}}}g\left(\lfrac{y}\right)$. Thus $\sum_{j \in \calC}g(x_j)\geq \left(\floor{\frac{u\lfrac{y'}}{\lfrac{y}}}+\lfrac{\frac{u\lfrac{y'}}{\lfrac{y}}}\right) g\left(\lfrac{y}\right) = \frac{u\lfrac{y'}}{\lfrac{y}}g(\lfrac{y}) = \frac{u\lfrac{y'}}{\lfrac{y}}\lfrac{y}(1-\lfrac{y}) = u\lfrac{y'}\rfrac{y}$. The last equation used the fact that $\lfrac{y} + \rfrac{y} = 1$ if $y$ is fractional and $\lfrac{y'} = 0$ if $y$ is integral. 
\ifdefined\CR
\end{proofof}
\else
\end{proof}
\fi

\ifdefined\CR
\smallskip
\begin{proofof}{\bf Proof of Lemma~\ref{lemma:constructing-neighborhood-trees}}
\else
\begin{proof}[\bf Proof of Lemma~\ref{lemma:constructing-neighborhood-trees}]
\fi
The first step is a simple iterative process. We maintain a spanning forest of rooted trees for $\calC^*$. Initially, we have $\cardinal{\calC^*}$ singletons. At each iteration, we arbitrarily choose a tree $T=({\calV}, E, r)$ of size less than $\ell$. Let $v^* = \arg\min_{v \in \calC^* \setminus {\calV}}d(r, v)$ be the nearest neighbor of $r$ in $\calC^* \setminus {\calV}$. Assume $v^*$ is in some rooted tree $T' = ({\calV}', E', r')$. Then, we merge $T$ and $T'$ by adding an edge $(r, v^*)$, and let $v^*$ be the parent of $r$. i.e, the new tree will be $({\calV} \cup {\calV}', E \cup E' \cup\set{(r, v^*)}, r')$. The process ends when all rooted trees have size at least $\ell$. 

Now we show that every rooted tree in the spanning forest is a neighborhood tree. Initially all trees are trivially neighborhood trees; it suffices to prove that the new tree formed by merging two neighborhood trees is also a neighborhood tree. Consider two neighborhood trees $T = ({\calV}, E, r)$ and $T' = ({\calV}', E', r')$, and suppose we obtain a merged tree $T''$ by adding an edge $(r, v^*)$ for some $v^* \in {\calV}'$.  Then, for every $v \in {\calV} \setminus \set{r}$ we have $d(v, \calC^* \setminus \Lambda_{T''}(v)) = d(v, \rho_{T''}(v))$ since $\Lambda_{T''}(v) = \Lambda_T(v), \rho_{T''}(v) = \rho_{T}(v)$ and $T$ is a neighborhood tree. Also, we have $d(r, \calC^* \setminus \Lambda_{T''}(r)) = d(r, \rho_{T''}(r))$ since $\Lambda_{T''}(r) = {\calV}$ and $\rho(T'')(r) = v^*$ is the nearest neighbor of $r$ in $\calC^* \setminus {\calV}$. Finally, for every $v \in {\calV}' \setminus \set{r'}$, we have $d(v, \calC^* \setminus T''_{v}) = d(v, \rho_{T''}(v))$ since $\rho_{T''}(v) = \rho_{T'}(v) \in \calC^* \setminus \Lambda_{T''}(v) \subseteq \calC^* \setminus \Lambda_{T'}(v)$ and $T'$ is a neighborhood tree.

All neighborhood trees we constructed have size at least $\ell$.  However, they might have size much larger than $\ell^2$.  Thus, we need to break a large neighborhood tree.  Focus on a neighborhood tree $T = ({\calV}, E, r)$ of size more than $\ell^2$ and consider its growth in the iterative process. Initially it contains only a single vertex $r$. During the execution of the process, we merge it with some neighborhood tree $T'$ of size less than $l$, by ``hanging'' $T'$ at some vertex of $T$.  We call $T'$ a treelet. When we hang $T' = ({\calV}', E', r')$ at some vertex $v^*$ of $T$, we have $d(r', v^*) = d(r', \calC^* \setminus {\calV}')$.

Let $\tilde T$ be the tree obtained from $T$ by contracting vertices of the same treelet into a super-node (for convenience, the root $r$ is a treelet). Thus, $\tilde T$ is a tree rooted at $r$, where each super-node corresponds to a treelet. Let the weight of a super-node be the size of its correspondent treelet.  Consider the deepest super-node of $\tilde T$ such that the total weight of the sub-tree rooted at this super-node is at least $\ell(\ell-1)$. The treelet $T' = ({\calV}', E', r')$ correspondent to this super-node has size at most $\ell-1$. Then there must be a vertex $v \in {\calV}'$ such that the total weight hanging at this vertex is at least $\frac{\ell(\ell-1)-(\ell-1)}{\ell-1} = \ell - 1$.  Then, we break $T$ into two parts: the sub-tree $T^1$ of $T$ rooted at $v$, and the sub-tree $T^2$ of $T$ obtained by removing descendants of $v$. The two sub-trees share the vertex $v$.  Then $T^2$ obviously a neighborhood tree since it is the sub-tree of $T$ rooted at $v$ and $T$ is a neighborhood tree.  Also, $T^1$ is neighborhood tree since we can obtain $T^1$ from $r$ by repeatedly hanging treelets.   $T^2$ has size at least $\ell$ and at most $\ell(\ell-1)$ and $T^1$ has size at least $\ell$.  Now we let $T \leftarrow T^1$ and repeat the process until the size of $T$ is at most $\ell^2$. 

All neighborhood trees have size between $\ell$ and $\ell^2$ and the union of all neighborhood trees cover all vertices of $\calC^*$.  Every vertex of $\calC^*$ can appear at most once as a non-root of some neighborhood tree. 
\ifdefined\CR
\end{proofof}
\else
\end{proof}
\fi

\section{Integrality Gap for the Rectangle LP}
\label{sec:lower-bound}

In this section, we show that the integrality gap of the rectangle LP is $\Omega(\log n)$ if we are only allowed to open $k$ facilities.  

The instance is as follows. Let $G = (V, E)$ be a degree-3 expander of size $|V| = u$ and the metric is the graph metric defined by $G$.  Each vertex $i \in V$ is a facility location and there are $u + 1$ clients at $i$. We are allowed to open $k = u  + 1$ facilities and each open facility has capacity $u$.  Thus, there are $|\calC| = n := u(u+1)$ clients. Previously we assumed $\calF = \calC$. Since we only need to keep one facility location for every set of $u+1$ co-located clients, we can assume $\calF = V$.

We first show that the cost of the optimum integral solution is $\Omega(u\log u)$. Intuitively,  an optimum solution opens two facilities at some $i \in \calF$ and one facility at each of the other locations.  In this case, the cost of the optimum integral solution is $\Omega(u \log u)$.  However, it is a little bit involved to prove this intuition and thus we shall avoid it. 

If there are more than $\log u$ locations in $\calF$ without open facilities, then the cost of the facilities is at least $(u+1)\log u= \Omega(u \log u)$.  Thus we assume there are less than $\log u$ locations without open facilities. If a location contains one open facility,  we can assume it is connected by  $u$ clients at this location. Then we remove this open facility and the $u$ clients.  Now, we have at least one client left at each location and at most $\log u + 1$ open facilities left. It is easy to see that these clients will cost $\Omega(u \log u)$.

We now turn to prove that the optimum fractional solution to the rectangle LP is $O(u)$. We open $y_i = 1 + 1/u$ facilities at each location $i \in \calF$. Fix a client $j \in \calC$ co-located with $i$. Let $x_{i,j} = 1-3\gamma/u$ for some large enough constant $\gamma$. For each location $i'$ that is a neighbor of $i$, let $x_{i',j} = \gamma/u$. For all other locations $i'$ we have $x_{i',j} = 0$. This is obviously a valid solution to the basic LP.  Every client $j$ has cost $3\gamma/u$. Thus the total cost is $3\gamma(u+1) = O(u)$.

We now show that Constraint~\eqref{LPC:compact} holds. Focus on a non-empty set $\calB \subsetneq \calF$ of locations.  Let $t = |\calB|$ and $q = y_\calB = t(1+1/u)$.  We identify $\calC$ with $[n]$ and assume $x_{\calB,1} \geq x_{\calB,2} \geq \cdots \geq x_{\calB, n}$. It suffices to prove that for every $p \in [n]$, we have $x_{\calB, [p]} \leq f(p, q)$.  

Given a concave function $g$ on $\set{0, 1, 2, \cdots, n}$,  we say an integer $t \in [0, n]$ is a corner point if either $t \in \set{0, n}$ or $2g(t) > g(t-1) + g(t+1)$. Notice that $x_{\calB, [p]}$ is a concave function of $p$.  Since the $u + 1$ clients at each location $i$ are symmetric, a corner point of $x_{\calB, [\cdot]}$ must be a multiply of $u+1$. $f(\cdot, q)$ has four corner points: $0$, $ut, ut + u$ and $n$. 

Assume $x_{\calB,[p]} > f(p, q)$ for some $p$. Then this must hold for some $p'$ which is either a corner point of $x_{\calB,[\cdot]}$, or a corner point of $f(\cdot, q)$. As $(x,y)$ is a valid solution to the basic LP, we have $x_{\calB, [p]} \leq \min\set{p, qu}$ for every $p$.  From the definition of $f$, it must be the case that $p' \in (u\floor{q}, u \ceil{q}) = (ut, ut + u)$. The only possibility for $p'$ is $p' = (u+1)t$. In this case, $[p']$ contain the $(u+1)t$ clients co-located with facilities in $\calB$.

 %As $(x,y)$ is a valid solution to the basic LP, we have $x_{\calB, [p]} \leq \min\set{uy_\calB, p}$. It suffices to prove that for every $p \in (u\floor{y_\calB}, u \ceil{y_\calB}) = (u|\calB|, u|\calB| + u)$, we have $x_{\calB, [p]} = u|\calB| + (p-u|\calB|)|\calB|/u$.

%Focus on a set $\calB \subseteq \calF$ and $\calJ \subseteq \calC$; let $p = |\calC|$ and $q = y_\calB$. Since $(x,y)$ is a solution to the basic LP,  we have $x_{\calB, \calJ} \leq \min\set{p, uq}$.  It suffices to prove that when $p \in (u\floor{q}, u\ceil{q})$, we have $x_{\calB, \calJ} \leq u\floor{q} + (p - u\floor{q})\lfrac{q}$.  We can assume $\emptyset \subsetneq \calB \subsetneq \calF$. Then, $\floor{q} = \floor{|\calB|(1+1/u)} = |\calB|, \ceil{q} = |\calB| + 1$ and $\lfrac{q} = |\calB|/u$.   Thus, we need to prove that when $p \in (u|\calB|, u|\calB|+u)$, we have $x_{\calB, \calJ} \leq u|\calB| + (p-u|\calB|)|\calB|/u $.
%

%Notice that $x_{\calB, j} > x_{\calB, j'}$ for any client $j$ co-located with some facility location in $\calB$ and client $j'$ co-located with some facility in $\calF \setminus \calB$. If $p \leq u|\calB| + \calB$, $x_{\calB, \calJ}$ is maximized if all clients in $\calJ$ are co-located with facilities in $\calB$.

First assume $t \leq u/2$.   Then, $x_{\calB, [p']} \leq (u+1)(t - \alpha t\frac{\gamma}{u}) \leq ut + t - \gamma\alpha t$, where $\alpha$ is the expansion constant of $G$. The inequality is due to the fact that $E(\calB, \calF \setminus \calB) \geq \alpha |\calB| = \alpha t$. If $\gamma\geq 1/\alpha$, we have $x_{\calB, [p']} \leq ut = f(ut, t) \leq f(p', q)$.   Now assume $t > u/2$. Then $x_{\calB, [p']}\leq (u+1)(t - \alpha(u-t)\frac{\gamma}{u})$. If $\gamma \geq 1/\alpha$, we have $x_{\calB, [p']} \leq (u+1)t-(u - t) = ut +2t - u \leq ut + t^2/u = f(p', q)$.  This leads to a contradiction.  Thus, the fractional solution satisfies Constraint~\eqref{LPC:compact}.

Overall, we have showed an $\Omega(\log u) = \Omega(\log n)$ integrality gap. 
\section{Discussion}
\label{section:discussion}

In this paper, we introduced a novel rectangle LP relaxation for uniform \CKM and gave a rounding algorithm which produces $\exp\left(O(1/\eps^2)\right)$-approximate solutions by opening $(1+\eps)k$ facilities.  This is beyond the approximability of the natural LP relaxation, as it has unbounded integrality gap even if $(2-\eps)k$ facilities are allowed to be opened.  There are many related open problems. 

First, can our rectangle LP give a constant approximation for uniform \CKM by violating capacity constraints by $1+\eps$?  The difficulty of this problem seems to be that each facility has a capacity constraint and we need to guarantee that none of them is violated by too much. While in our problem, we are only concerned with one cardinality constraint.

Second, can we extend our result to non-uniform \CKM?  Without uniform capacities, we can not even prove Theorem~\ref{theorem:soft-hard-same}. Also, Constraint~\eqref{LPC:compact} crucially used the uniformity.  Without it, we may need to generalize Constraint~\eqref{LPC:compact}.

Finally, can we obtain a true constant approximation for uniform \CKM?  
\ifdefined \CR
After submitting the extended abstract of the paper, we found an example showing that the integrality gap of the rectangle LP is $\Omega(\log n)$, if the cardinality constraint can not be violated. We defer the proof of the integrality gap to the full version of the paper. Thus, a stronger LP is needed to obtain a true constant approximation.
\else
As we showed that the integrality gap of the rectangle LP is $\Omega(\log n)$, if the cardinality constraint can not be violated, we need a stronger LP to obtain a true constant approximation. 
\fi

\section*{Acknowledgement} I want to thank Ola Svensson for pointing out a simplification for the rectangle LP.  I also thank anonymous reviewers for many useful comments over the extended abstract version of the paper. 

\bibliographystyle{plain}
\bibliography{reflist}
\end{document}